
\typeout{IJCAI--23 Instructions for Authors}


\documentclass{article}
\pdfpagewidth=8.5in
\pdfpageheight=11in

\usepackage{ijcai23}

\usepackage{times}
\usepackage{soul}
\usepackage{url}
\usepackage[hidelinks]{hyperref}
\usepackage[utf8]{inputenc}
\usepackage[small]{caption}
\usepackage{graphicx}
\usepackage{amsmath}
\usepackage{amsthm}
\usepackage{booktabs}
\usepackage{algorithm}
\usepackage{algorithmic}
\usepackage[switch]{lineno}

\usepackage{amssymb}
\usepackage{framed}
\usepackage{tikz}
\usepackage{subfigure}
\usepackage{multirow}
\usepackage{pifont}
\usepackage{ulem}
\normalem


\urlstyle{same}



\newtheorem{theorem}{Theorem}
\newtheorem{definition}{Definition}



\pdfinfo{
/TemplateVersion (IJCAI.2023.0)
}

\title{Truthful and Stable One-sided Matching on Networks}

\author{
Tianyi Yang
\and
Yuxiang Zhai\and
Dengji Zhao\and
Xinwei Song\And
Miao Li
\affiliations
ShanghaiTech University
\emails
\{yangty2, zhaiyx, zhaodj, songxw, limiao\}@shanghaitech.edu.cn
}

\begin{document}

\maketitle

\begin{abstract}
Mechanism design on social networks is a hot research direction recently, and we have seen many interesting results in auctions and matching. Compared to the traditional settings, the new goal of the network settings is that we need to design incentives to incentivize the participants of the game to invite their neighbors on the network to join the game. This is challenging because they are competing for something (e.g., resources or matches) in the game. In one-sided matching, especially house exchange, the well-known unique truthful, stable and optimal solution called Top Trading Cycle (TTC) cannot achieve the new goal. Existing works have tried to add constraints on TTC to obtain the incentive, but it only works in trees and it does not guarantee any stability. In this paper, we move this forward and propose the first mechanism called Leave and Share (LS) which not only achieves the goal in all networks but also gives the most stable solution in the new settings. In terms of optimality, as it is impossible to achieve it in any network, we conduct simulations to compare it with the extensions of TTC.

\end{abstract}

\section{Introduction}

Incentivizing agents to invite new agents via their social connections is a new trend in mechanism design. Different from traditional static settings, the agents' connections are specifically considered and utilized by the mechanisms to enlarge the market~\cite{DBLP:conf/atal/Zhao21}. This is achieved by incentivizing the agents who are already in the game to invite their neighbors to join the game. In most games, a larger market contributes to a more desirable outcome. Particularly, in one-sided matching, more participants may lead to a more satisfied matching. 


However, inviting more agents is not always better for the inviters. For example, A starts a matching game, and she invites her neighbors B and C to join in, and A prefers both B's and C's items. If in the end, B exchanges with C, and A does not exchange with any of them. However, if B is not in the game, A will exchange with C. In this example, A is not incentivized to invite B to this game.

To remove A's hesitation to invite all her neighbors, 
we should guarantee that the inviters' match is not getting worse after inviting their neighbors. 
In traditional one-sided matching, the well-known Top Trading Cycle (TTC) mechanism gives the unique truthful, stable and optimal solution~\cite{shapley1974cores,ma1994strategy}. However, TTC failed to incentivize the participants to invite others, because an invitee might compete with her inviters for the same match~\cite{DBLP:conf/atal/KawasakiWTY21}. 
In order to achieve the incentive, we may add constraints on TTC such as each agent can only exchange with certain agents or the network has to be trees~\cite{zheng2020barter,DBLP:conf/atal/KawasakiWTY21}. 

The restrictions on TTC indeed obtained the invitation incentive, but they limited the participants' matching choices and contradict the purpose of enlarging the game. The restrictions simply do not allow an invitee to have any chance to compete with her inviters, 
which also forbid a better allocation for those invitees who will not harm their inviters. Therefore, we still have space to further improve the matching by relaxing the restrictions. The challenge is that we cannot know whether an invitee will bring harm to her inviters or not without fixing a matching mechanism. That said, it seems that a natural way to find a better mechanism is through trial and error. However, the space and description of the matching mechanism are exponential in the number of participants, which is not doable in practice.



In this paper, we move this effort forward by proposing a new way of designing a better matching by relaxing a rather restricted mechanism. Our mechanism is based on a very restricted mechanism, called Swap With Neighbors(SWN), which only allows each agent to choose her preferred agent/item among her neighbors (also a restricted version of TTC). To relax the restriction, we allow matched agents to share their unmatched neighbors with others. By doing so, the later matched agents will have more choices, which is essential to improve their matching. We call this new mechanism Leave and Share (LS). This seems straightforward, but it is not because all agents would prefer matching later in order to get more shared choices from matched agents. Therefore, we need to carefully design the matching process to both utilize the shared benefits and also prevent new manipulations.


Finally, the LS mechanism can work for all networks (remove the structure restriction of the previous work). More importantly, we also show that it gives the most stable solution under networks (which is also firstly proved here). In summary, our contributions advance the state of the art in the following ways:

\begin{itemize}

    \item We prove the impossibilities of incentivizing invitation under the condition of optimality (stability). For stability, we extend the definition and prove what we can (cannot) achieve with invitation incentives. 
    \item We propose the LS mechanism to achieve both invitation incentive and the extended stability for the first time. LS equals TTC if the network is a complete graph. 
    \item To see the matching improvement under LS, 
    we conduct simulations to compare LS with TTC extensions. The results indicate that the sharing feature of LS indeed improves the matching in expectation in all settings under consideration.
\end{itemize}

\subsection{Related Work}

Mechanism design over social networks is a hot research topic, where agents’ social connections and interactions are considered by the mechanism. 
One popular way to utilize the connections is to attract new participants, which has made a significant progress in auctions and cooperative games \cite{DBLP:conf/aaai/LiHZZ17,li2022diffusion,zhang2022incentives}. In auctions, the main technique proposed to design the invitation incentives is to allow buyers to gain the social welfare increase due to their invitation. In cooperative games, invitees need to share their contributions with their inviters to get invited. Their methods rely on transferable utilities, which is not possible in matching. 

Specifically, for one-sided matching over social networks, traditional solutions like TTC cannot be directly applied to incentivize participants' invitation. To make TTC work in the network setting, \citeauthor{DBLP:conf/atal/KawasakiWTY21} \shortcite{DBLP:conf/atal/KawasakiWTY21} presented a modified TTC under tree networks only. Their modification restricted each participant's choices to her parent and her subtree. Their extension incentivized invitation by disabling more satisfiable matchings. \citeauthor{DBLP:conf/ijcai/GourvesLW17} \shortcite{DBLP:conf/ijcai/GourvesLW17} also studied one-sided matching in social networks, but they focused on static social connections and ignored agents' strategic behaviors.
To evaluate the efficiency of a matching mechanism, some cardinal methods are well-studied in the traditional setting \cite{PO_in_house_allocation_problems,cardinal_pref}. Motivated by these work, we also apply a cardinal method to evaluate 
our mechanism. 

Besides standard one-sided matching, \citeauthor{tenants_social_network} \shortcite{tenants_social_network} also studied a variant of house allocation problem \cite{abdulkadirouglu1999house} under the tree-structured networks. Moreover, \citeauthor{ijcai2022-27} \shortcite{ijcai2022-27} investigated two-sided matching over social networks, where they presented a series of impossibilities to show the hardness of the problem, and designed mechanisms under the assumption that one side is known. The goal of these papers is to extend the traditional solutions to get the invitation incentives required in the corresponding network settings.

\section{The Model}\label{sec:md}

We consider a one-sided matching problem in a social network denoted by an undirected graph $G=(N,E)$, which contains $n$ agents $N=\{1,\dots,n\}$. Each agent $i\in N$ is endowed with an indivisible item $h_i$ and $H=\{h_1,\dots,h_n\}$ is the set of all agents' items. We define agent $i$ as $j$'s neighbor if there is an edge $e\in E$ between agent $i$ and $j$, and let $r_i\subseteq N$ be $i$'s neighbor set. 

Each agent $i\in N$ has a strict preference $\succ_i$ over $H$. $h\succ_i h'$ means $i$ prefers $h$ to $h'$ and we use $\succeq_i$ to represent the weak preference. Denote agent $i$'s private type as $\theta_i=(\succ_i,r_i)$ and $\theta = (\theta_1,\cdots,\theta_n)$ as the type profile of all agents. Let $\theta_{-i}$ be the type profile of all agents except for agent $i$, then $\theta$ can be written as $(\theta_i,\theta_{-i})$. Let $\Theta$ be the type profile space of all agents. Similarly, we have $\Theta = (\Theta_i,\Theta_{-i})$. 

In a matching mechanism, each agent is required to report her type (reporting neighbor set is treated as inviting neighbors in practice). We denote agent $i$'s reported type as $\theta'_i = (\succ'_i,r'_i)$, where $\succ'_i$ is the reported preference and $r'_i\subseteq r_i$ is the reported neighbor set. Let $\theta'=(\theta'_1,\cdots,\theta'_n)$ be the reported type profile of all agents.


\begin{definition}

A one-sided matching mechanism is defined by an allocation policy $\pi = (\pi_i)_{i\in N}$, where $\pi_i:\Theta \to H$ satisfies for all $\theta \in \Theta$, for all $i$, $\pi_i(\theta) \in H$, and $\pi_i(\theta) \neq \pi_j(\theta)$ for all $i \neq j$.
\end{definition}

Different from traditional settings, we assume only a subset of the agents are initially in the game (e.g., one agent initiated a matching with her neighbors). Without loss of generality, suppose an agent set $N_0 \subseteq N$ contains the initial participants in the matching. The others need the existing participants' invitation to join the game. As the invitation process is modeled by reporting their neighbors, we define the qualified participants by their reported types.

For a given report profile $\theta'$, we generate a directed graph $G(\theta')=(N(\theta'),E(\theta'))$, where edge $\langle i,j\rangle \in E(\theta')$ if and only if $j \in r_i'$. Under $\theta'$, we say agent $i$ is qualified if and only if there is a path from any agent in $N_0$ to $i$ in $G(\theta')$. That is, $i$ can be properly invited by the invitation chain from agent set $N_0$. Let $Q(\theta')$ be the set of all qualified agents under $\theta'$. Then 
the matching mechanism can only use $Q(\theta')$.


\begin{definition}
A diffusion one-sided matching mechanism in social networks is a one-sided matching mechanism, $\pi = (\pi_i)_{i\in N}$, such that for all reported type profile $\theta'$, it satisfies:
\begin{enumerate}
    \item for all unqualified agents $i\notin Q(\theta')$, $\pi_i(\theta') = h_i$.
    \item for all qualified agents $i\in Q(\theta')$, $\pi_i(\theta')$ is independent of the reports of all unqualified agents.
 
\end{enumerate}
    
\end{definition}

The difference between a diffusion one-sided matching and the matching defined in Definition 1 is that the participants can affect the qualification of other participants. If a participant changes her reported neighbor set, the qualified agent set may change. This is the challenge of this setting.

Next, we define two desirable properties for diffusion one-sided matching mechanisms: individual rationality and incentive compatibility. Intuitively, individual rationality requires that for each agent, reporting her type truthfully guarantees that she gets an item no worse than her own.

\begin{definition}[Individual Rationality (IR)]
A diffusion one-sided matching mechanism $\pi$ is individually rational if for all $i\in N $, all $\theta_i \in \Theta_i$, and all $\theta'_{-i}\in \Theta_{-i}$, we have $\pi_i(\theta_i,\theta_{-i}')\succeq_i h_i$.

\end{definition}

For incentive compatibility, it means reporting type truthfully is a dominant strategy for each agent.

\begin{definition}[Incentive Compatibility (IC)]
A diffusion one-sided matching mechanism $\pi$ is incentive compatible if for all $i\in N$, all $\theta'_{-i}\in \Theta_{-i}$ and all $\theta_i, \theta'_i\in \Theta_i$, we have $\pi_i(\theta_i,\theta'_{-i})\succeq_i \pi_i(\theta'_i,\theta'_{-i})$.

\end{definition}

To evaluate the performance of a matching mechanism, an important metric is called Pareto optimality.


\begin{definition}[Pareto Optimality (PO)]
A mechanism $\pi$ is Pareto optimal if for all type profile $\theta$, there is no other allocation $\pi'(\theta)$ such that for each agent $i$, $\pi'_i(\theta) \succeq_i \pi_i(\theta)$, and there exists at least one agent $j$, $\pi'_j(\theta) \succ_j \pi_j(\theta)$.
\end{definition}

Another metric is stability. A matching is stable if there does not exist any subset of agents who can deviate from the matching and match among the subset to make no one worse off, but at least one better off (this is called a blocking coalition). In the setting without networks, any subset of players can form a blocking coalition. However, in the network setting, they should know each other before they can form a coalition. Therefore, we assume that the blocking coalition in our setting is at least connected.


\begin{definition}[Blocking Coalition]
Given an allocation $\pi(\theta)$, we say a set of agents $S \subseteq N$ (with item set $H_S \subseteq H$) is a blocking coalition for $\pi(\theta)$ if $S$ forms a \textbf{connected component} in $G(\theta)$ and there exists an allocation $z(\theta)$ such that for all $i \in S, z_i(\theta) \in H_S$ and $z_i(\theta) \succeq_i \pi_i(\theta)$ with at least one $j\in S$ such that $z_j(\theta) \succ_j \pi_j(\theta)$. 
\end{definition}

\begin{definition}[Stability]
We say a mechanism $\pi$ is stable if for all type profiles $\theta$, there is no blocking coalition for $\pi(\theta)$.
\end{definition}

\section{Impossibility Results}\label{sec:ir}

In this section, we discuss the impossibility results in the network setting presented in Table 1.

\begin{table}[t]
\centering
\begin{tabular}{|c|c|c|c|c|}
    \hline
     & PO+IR & Stable & Stable-WCC & Stable-CC \\ \hline
    IC &\ding{53} &\ding{53} & \ding{53} & \ding{51} \\ \hline
\end{tabular}

\label{tab:impossibility}
\caption{The coexistence of IC with other properties.}
\end{table}

\begin{theorem}[Impossibility for PO, IC and IR]
Given a social network with no less than three agents, no diffusion matching mechanism is PO, IC and IR.
\end{theorem}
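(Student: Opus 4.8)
The plan is to build a small counterexample and argue that any mechanism satisfying all three properties produces a contradiction on it. First I would take the minimal case: three agents $\{1,2,3\}$ with $N_0$ chosen so that the network can be varied by the agents' reports, and exhibit two type profiles that differ only in one agent's reported neighbor set. The key idea is the one already sketched in the introduction: a line network where agent $1$ is the initiator connected to $2$, and $2$ is connected to $3$, so that $2$ controls whether $3$ is qualified. I would pick preferences so that on the small network (only $1,2$ present, i.e., $2$ hides $3$) PO$+$IR forces $1$ and $2$ to swap items, while on the full network PO$+$IR forces an outcome in which $1$ ends up worse off than swapping with $2$ — for instance, the classical instance where $1$ wants $h_2$, $2$ wants $h_3$, $3$ wants $h_1$, making the grand trading cycle the unique Pareto-optimal-and-IR allocation once all three are present.

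The next step is to turn this into an IC violation. With the preferences above, when $2$ reports $r_2'=\{1,3\}$ truthfully, PO$+$IR on $Q=\{1,2,3\}$ forces $\pi_1=h_2$? — careful: actually $1$ gets $h_2$ there, which is $1$'s top choice, so the manipulator must be $2$ or $3$, not $1$. I would therefore instead design the instance so that the manipulating agent is the intermediate one: choose preferences so that when $3$ is present, the forced PO$+$IR outcome gives agent $2$ an item that is worse (for $2$) than $h_1$, which $2$ could have secured by not inviting $3$. Concretely, let $1$ prefer $h_2$, $2$ prefer $h_1$ over $h_3$ over $h_2$, and $3$ prefer $h_2$ over its own, with $3$ only reachable through $2$; then hiding $3$ yields the $1$–$2$ swap (giving $2$ its favorite $h_1$), whereas any IR and Pareto-optimal allocation on all three agents must give $3$ something better than $h_3$, which drains $h_1$ or $h_2$ away from the $1$–$2$ swap and leaves $2$ worse off. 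That contradicts IC. Finally I would note the extension from three agents to "no less than three" is immediate: pad with extra agents who are each endowed with an item nobody else wants and who want only their own item, so they are irrelevant to every Pareto-optimal allocation and the argument is unchanged.

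The main obstacle I anticipate is the bookkeeping to make the forcing arguments airtight: I need preferences for which PO$+$IR pins down the allocation \emph{uniquely} on both the two-agent and three-agent instances (otherwise the mechanism could wriggle out by choosing a different PO$+$IR allocation), and I must make sure the diffusion constraint (Definition~3, part 2 — qualified agents' allocations are independent of unqualified agents' reports) is used correctly so that "$2$ hides $3$" genuinely yields the two-agent outcome. Verifying uniqueness of the PO$+$IR allocation in each case is the crux; once that is established, the IC contradiction is a one-line comparison of what the manipulator $2$ gets in the two scenarios.
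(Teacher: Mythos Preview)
Your plan has a genuine gap: the single-manipulation argument (agent~$2$ hides agent~$3$) cannot establish the impossibility, no matter how you tune the preferences. The reason is that for the two-agent outcome to give $2$ her top item $h_1$, you need $h_2 \succ_1 h_1$ and $h_1 \succ_2 h_2$; but then on the full three-agent profile the allocation $(h_2,h_1,h_3)$ is already IR, and either it is PO or it is Pareto-dominated only by $(h_3,h_1,h_2)$, which is then itself PO and IR. In both cases there exists a PO$+$IR allocation on three agents that hands agent~$2$ her favorite item $h_1$. A mechanism designer can therefore always select that allocation at the truthful profile, and agent~$2$ gains nothing by dropping~$3$. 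So the uniqueness you flagged as ``the crux'' is not just bookkeeping---it is unattainable in the direction you need.

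The paper's proof sidesteps this by \emph{not} seeking uniqueness. It fixes a profile (preferences $h_3\succ_1 h_2\succ_1 h_1$, $h_1\succ_2 h_2\succ_2 h_3$, $h_1\succ_3 h_3\succ_3 h_2$ on the line $1$--$2$--$3$) for which there are exactly two PO$+$IR allocations, $a_3=(h_2,h_1,h_3)$ and $a_6=(h_3,h_2,h_1)$, and then runs a case split: if the mechanism outputs $a_3$, agent~$1$ profitably misreports her \emph{preference} to $h_3\succ_1' h_1\succ_1' h_2$, which makes $a_6$ the unique PO$+$IR allocation and improves $1$'s item from $h_2$ to $h_3$; if the mechanism outputs $a_6$, agent~$2$ hides~$3$, forcing the two-agent swap $a_3$ and improving $2$'s item from $h_2$ to $h_1$. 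The key idea you are missing is that you need \emph{two different} deviations, one tailored to each candidate PO$+$IR output, rather than a single deviation that beats every output simultaneously.
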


\begin{proof}
In the example shown in Figure \ref{fig:sta1}, the only PO and IR allocations are $a_3$ and $a_6$. For the former, agent 1 can misreport her preference as $h_3 \succ_1 h_1 \succ_1 h_2$. Under agent 1’s misreport, the only PO and IR allocation will be $a_6$, and 1 reaches a better allocation under $a_6$ thus violating IC. For the latter, agent 2 can misreport her neighbor set as \{1\} and disqualify agent 3. In this way, the only PO and IR allocation is $a_3$, 2 reaches a better allocation which also violates IC. Hence, no mechanism under the social network setting with no less than 3 agents is PO, IC and IR. 
\end{proof}

\begin{figure}[t]
\centering

\begin{tikzpicture}[scale=0.3, sibling distance=5em,
  every node/.style = {scale=1, shape=circle, draw, align=center},
    outline/.style={draw=#1,thick,fill=#1!100}]
  every draw/.style = {scale=1}
  \node[] (node1) at (0,0) {1};
  \node[] (node2) at (4,0) {2};
  \node[] (node3) at (8,0) {3};
  \draw[latex-latex] (node1)--(node2);
  \draw[latex-latex] (node2)--(node3);
  
\end{tikzpicture}

\caption{An example of social networks. }
\label{fig:sta1}
\end{figure}


\begin{table}[t]
\small
\centering
\setlength{\tabcolsep}{1pt}{
 \begin{tabular}{|c|c|c|c|c|c|c|}
 \hline
Preference & Allocation & PO+IR & TTC & Stable & SCC & SWCC \\ \hline
\multirow{2}{*}{$h_3 \succ_1 h_2 \succ_1 h_1$} & $a_1$:($h_1,h_2,h_3$) &  \ding{53}  &   \ding{53}  & \ding{53}  & \ding{53} & \ding{53} \\ \cline{2-7} 
& $a_2$:($h_1,h_3,h_2$) &  \ding{53}  &   \ding{53}  & \ding{53}  & \ding{53} & \ding{53} \\ \cline{2-7} 
\multirow{2}{*}{$h_1 \succ_2 h_2 \succ_2 h_3$} & $a_3$:($h_2,h_1,h_3$) &  \ding{51} & \ding{53}  & \ding{51}  & \ding{51} & \ding{51}  \\ \cline{2-7} 
& $a_4$:($h_2,h_3,h_1$) &  \ding{53}  &   \ding{53}  & \ding{53}  & \ding{53} & \ding{53} \\ \cline{2-7} 
\multirow{2}{*}{$h_1 \succ_3 h_3 \succ_3 h_2$} & $a_5$:($h_3,h_1,h_2$) &  \ding{53}  &  \ding{53}  & \ding{53}  & \ding{53} & \ding{53} \\ \cline{2-7} 
& $a_6$:($h_3,h_2,h_1$) &  \ding{51} & \ding{51}  & \ding{51}  & \ding{51} & \ding{51} \\ \hline \hline 
\multirow{2}{*}{\textcolor{red}{$h_3 \succ_1' h_1 \succ_1' h_2$}} & $a_1$:($h_1,h_2,h_3$) &  \ding{53}  &   \ding{53}  & \ding{53}  & \textcolor{red}{\ding{51}} & \ding{53} \\ \cline{2-7} 
& $a_2$:($h_1,h_3,h_2$) &  \ding{53}  &   \ding{53}  & \ding{53}  & \ding{53} & \ding{53} \\ \cline{2-7} 
\multirow{2}{*}{$h_1 \succ_2 h_2 \succ_2 h_3$} & $a_3$:($h_2,h_1,h_3$) &  \ding{53} & \ding{53}  & \textcolor{red}{\ding{53}}  & \textcolor{red}{\ding{53}} & \textcolor{red}{\ding{53}}  \\ \cline{2-7} 
& $a_4$:($h_2,h_3,h_1$) &  \ding{53}  &   \ding{53}  & \ding{53}  & \ding{53} & \ding{53} \\ \cline{2-7} 
\multirow{2}{*}{$h_1 \succ_3 h_3 \succ_3 h_2$} & $a_5$:($h_3,h_1,h_2$) &  \ding{53}  &  \ding{53}  & \ding{53}  & \ding{53} & \ding{53} \\ \cline{2-7} 
& $a_6$:($h_3,h_2,h_1$) &  \ding{51} & \ding{51}  & \ding{51}  & \ding{51} & \ding{51} \\ \hline
\end{tabular}
}

\caption{All possible allocations of Figure \ref{fig:sta1} and whether an allocation satisfies specific properties. If agent 1 misreports, the allocation satisfies Stable-CC (SCC) will be $a_1$ and $a_6$, instead of $a_6$ only as under stable or Stable-WCC (SWCC).}
\label{tab:theta1}
\vspace{-1.0em}
\end{table} 

A similar result also holds even if the neighbor relationship is asymmetric\cite{DBLP:conf/atal/KawasakiWTY21}.

\begin{theorem}[Impossibility for stability and IC]
Given a social network with no less than three agents, no diffusion matching mechanism is stability and IC.
\label{the2}
\end{theorem}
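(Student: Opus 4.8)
The plan is to reuse the very same three-agent instance from Figure~\ref{fig:sta1} and Table~\ref{tab:theta1} that established Theorem~1, since the proof of Theorem~1 already showed that every stable allocation in that instance coincides with a PO+IR allocation, namely $a_3$ or $a_6$. I would first observe that stability implies IR (an agent getting worse than her own item forms a singleton blocking coalition, and a single vertex is trivially a connected component), so a stable mechanism on this network must select $a_3$ or $a_6$ at the truthful profile $\theta$. Then I would argue by cases on which of these two allocations the mechanism picks.

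\textbf{Case }$a_6$\textbf{.} Here agent~2 receives $h_2$, her own item. I would have agent~2 misreport her neighbor set as $r_2' = \{1\}$, which disqualifies agent~3. On the reduced two-agent network $\{1,2\}$ with $h_3\succ_1 h_2\succ_1 h_1$ and $h_1\succ_2 h_2\succ_2 h_3$, the only stable allocation assigns $h_2$ to agent~1 and $h_1$ to agent~2 (the swap is the unique blocking-coalition-free outcome, and indeed any IR mechanism must do this swap). So agent~2 gets $h_1\succ_2 h_2$, a strict improvement, contradicting IC.

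\textbf{Case }$a_3$\textbf{.} Here agent~1 receives $h_2$. I would have agent~1 misreport her preference as $h_3\succ_1' h_1\succ_1' h_2$. Under $(\theta_1',\theta_{-1})$, as recorded in the lower half of Table~\ref{tab:theta1}, the only stable allocation is $a_6$, in which agent~1 receives $h_3$. Since $h_3\succ_1 h_2$ under agent~1's true preference, this is again a strict gain, contradicting IC. Combining the two cases, no diffusion matching mechanism on a network with at least three agents can be both stable and IC.

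The only genuinely non-routine step is verifying the stability columns of Table~\ref{tab:theta1}: that $\{a_3, a_6\}$ are exactly the stable allocations under $\theta$, and that $a_6$ is the unique stable allocation under the misreport $(\theta_1',\theta_{-1})$ (in particular that $a_3$ ceases to be stable because agents $1$ and $3$, who are connected, now block it via the exchange giving $h_3$ to~$1$ and $h_1$ to~$3$). I would handle this by enumerating all six allocations and, for each, exhibiting a connected blocking coalition or checking there is none — a short finite check, essentially already displayed in the table. Everything else (stability $\Rightarrow$ IR, the two-agent reduction in Case~$a_6$, the disqualification mechanics) follows directly from the definitions in Section~\ref{sec:md}.
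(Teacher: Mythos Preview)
Your approach is essentially the paper's own: use the three-agent path of Figure~\ref{fig:sta1}, observe that the stable allocations coincide with the PO+IR ones ($a_3$ and $a_6$), and reuse the two misreports from Theorem~1. The overall structure is correct.

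There is, however, one concrete error in your parenthetical justification in Case~$a_3$. You claim that under the misreport $\succ_1'$, allocation $a_3$ ceases to be stable because ``agents $1$ and $3$, who are connected, now block it''. But in the network of Figure~\ref{fig:sta1} agents $1$ and $3$ are \emph{not} adjacent --- the graph is the path $1\text{--}2\text{--}3$ --- so the induced subgraph on $\{1,3\}$ is disconnected and $\{1,3\}$ cannot serve as a blocking coalition under Definition~6. The correct (and simpler) reason $a_3$ fails is the one you already set up at the start: under $\succ_1'$ agent~1 receives $h_2$, which is ranked below her own item $h_1$, so the singleton $\{1\}$ blocks. (One can also check that $\{1,2,3\}$ does not block $a_3$ via $a_6$, since agent~2 strictly loses in that swap; the singleton is really the operative coalition.) With that correction, your enumeration goes through and matches Table~\ref{tab:theta1}.
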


\begin{proof}
Consider the example given in Figure \ref{fig:sta1}, the stable allocations are identical to the PO and IR ones (under the two type profile in Table \ref{tab:theta1}). With the same strategic misreport, it can be concluded that stability is incompatible with IC.
\end{proof}


To seek for an achievable stability in social networks, we should further restrict the blocking coalitions. In the traditional setting, since there are no constraints on social connections, the agents can be viewed as fully connected. Then, any blocking coalition is a complete component. Therefore, we require the blocking coalitions to be complete components.


 \begin{definition}[Blocking Coalition under Complete Components]
 Given an allocation $\pi(\theta)$, we say a set of agents $S \subseteq N$ (with item set $H_S \subseteq H$) is a \textbf{blocking coalition under complete components} for $\pi(\theta)$ if $S$ forms a complete component in $G(\theta)$ and there exists an allocation $z(\theta)$ such that for all $i \in S, z_i(\theta) \in H_S$ and $z_i(\theta) \succeq_i \pi_i(\theta)$ with at least one $j\in S,  z_j(\theta) \succ_j \pi_j(\theta)$. 
 \end{definition}

 \begin{definition}[Stability under Complete Components (Stable-CC)]
We say a mechanism $\pi$ is \textbf{stable under complete components} if for all type profiles $\theta$, there is no blocking coalition under complete components for $\pi(\theta)$.
 \end{definition}

In this paper, we will design a matching mechanism that satisfies IC, IR and Stable-CC. This extended stability looks rather restricted, is it possible to make a slight relaxation on the complete component? In fact, it is not achievable even if we just remove a single edge from a complete component.

\begin{definition}[Nearly Complete Component]
We call a connected graph $G=(V,E)$ a nearly complete component if $|E| = \frac{|V|\cdot(|V|-1)}{2} - 1$. 
\end{definition}

The gap between a complete component and a nearly complete component is only one edge. We define a new stability under nearly complete components and prove that it is impossible to coexist with IC. 

\begin{definition}[Blocking Coalition under Weakly Complete Components]
Given an allocation $\pi(\theta)$, we say a set of agents $S \subseteq N$ (with item set $H_S \subseteq H$) is a \textbf{blocking coalition under weakly complete components} for $\pi(\theta)$ if $S$ forms a nearly complete component or a complete component in $G(\theta)$ and there exists an allocation $z(\theta)$ such that for all $i \in S, z_i(\theta) \in H_S$ and $z_i(\theta) \succeq_i \pi_i(\theta)$ with at least one $j\in S,  z_j(\theta) \succ_j \pi_j(\theta)$. 
\end{definition}

 \begin{definition}[Stability under Weakly Complete Components (Stable-WCC)]
We say a mechanism $\pi$ is \textbf{stable under weakly complete components} if for all type profiles $\theta$,  there is no blocking coalition under weakly complete components for $\pi(\theta)$.
 \end{definition}


\begin{theorem}[Impossibility for Stable-WCC and IC]
Given a social network with no less than three agents, no diffusion matching mechanism is Stable-WCC and IC.
\end{theorem}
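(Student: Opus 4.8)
The plan is to reuse the instance of Figure~\ref{fig:sta1} --- the path $1$--$2$--$3$ --- and to observe that on this particular network Stable-WCC is no weaker than full stability, so that the argument behind Theorem~\ref{the2} transfers almost verbatim. The key preliminary observation is purely combinatorial: a path on three vertices has exactly $\frac{3\cdot 2}{2}-1 = 2$ edges, so $\{1,2,3\}$ itself is a \emph{nearly} complete component, while every other connected subset of agents in this graph (a single vertex, or one of the edges $\{1,2\}$, $\{2,3\}$) induces a complete component. Hence on this instance the admissible blocking coalitions for Stable-WCC are exactly all connected subsets of agents, i.e., a Stable-WCC allocation here is precisely a stable allocation.

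Given that, I would proceed as in the proof of Theorem~\ref{the2}, reading the enumeration off Table~\ref{tab:theta1}. Under the truthful profile $\theta$, the only PO+IR allocations --- equivalently, by the observation above, the only stable and the only Stable-WCC allocations --- are $a_3=(h_2,h_1,h_3)$ and $a_6=(h_3,h_2,h_1)$. I then split on which of the two the mechanism outputs. If it outputs $a_3$, agent $1$ misreports the preference $h_3\succ_1' h_1\succ_1' h_2$; under $(\succ_1',\theta_{-1})$ the unique Stable-WCC allocation is $a_6$ (the bottom half of Table~\ref{tab:theta1}), and since truthfully $h_3\succ_1 h_2$, agent $1$ strictly improves, contradicting IC. If instead it outputs $a_6$, agent $2$ misreports $r_2'=\{1\}$, disqualifying agent $3$; the residual game is played on the complete component $\{1,2\}$ with items $\{h_1,h_2\}$, whose only stable (hence only Stable-WCC) outcome gives agent $2$ the item $h_1$, which she strictly prefers to $h_2$ --- again contradicting IC. Since the instance has three agents, the impossibility extends to every network with at least three agents.

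The only real obstacle is the first step: one has to make explicit that a three-vertex path already qualifies as a nearly complete component, so that the relaxation from ``complete'' to ``weakly complete'' components buys nothing on this instance and no genuinely larger example is needed; after that point, everything collapses to the case analysis already performed for Theorem~\ref{the2}. I would also be a little careful about the degenerate subgame created by agent $2$'s deviation, namely to confirm that restricting attention to the edge $\{1,2\}$ yields a well-defined complete-component subgame in which the stable allocation is forced, so that agent $2$'s gain from the deviation is unambiguous.
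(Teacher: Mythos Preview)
Your proposal is correct and follows exactly the paper's own approach: the paper's proof of this theorem simply notes that on the three-vertex path of Figure~\ref{fig:sta1} the Stable-WCC blocking coalitions coincide with the ordinary (connected) blocking coalitions, so the argument of Theorem~\ref{the2} carries over unchanged. Your write-up is just a more explicit version of this, spelling out the edge count $\tfrac{3\cdot 2}{2}-1=2$ and the resulting case analysis that the paper leaves to Table~\ref{tab:theta1}.
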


\begin{proof}
Recall the example given in Figure \ref{fig:sta1}, the possible blocking coalitions and allocations under Stable-WCC is identical to the example used in the proof of Theorem~\ref{the2}. Therefore, the same reasoning applies here. 
\end{proof}

\section{The Mechanism}\label{sec:mc}

Before we introduce our mechanism, we first define the Top Trading Cycle and its extensions.
\begin{definition}[Top Trading Cycle]
For a given $G(\theta')$, \uline{construct a directed graph by letting each agent point to the agent who has her favorite item remaining in the matching.} There is at least one cycle. For each cycle, allocate the item to the agent who points to it and remove the cycle. Repeat the process until there is no agent left. 

\end{definition}



TTC cannot ensure IC in the new setting, one trivial extension is called Swap With Neighbors (SWN), which only allows agents to swap with their neighbors. Intuitively, agents can only get allocated items from those who they invite and if they do not invite their neighbors, they will have no opportunity to get allocated items they prefer. Formal proof will be given in the appendix.


\begin{definition}[Swap With Neighbors]
For a given $G(\theta')$, \uline{construct a directed graph by letting each agent point to her favorite item among herself and her neighbors remaining in the matching.} There is at least one cycle. For each cycle, allocate the item to the agent who points to it and remove the cycle. Repeat the process until there is no agent left. 
\end{definition}
Another attempt is to restrict the network to trees and allow each agent to swap with her neighbors and subtree
\cite{DBLP:conf/atal/KawasakiWTY21}. Let's call it Swap With Children (SWC). 

\begin{definition}[Swap With Children]
For a given $G(\theta')$, \uline{construct a directed graph by letting each agent points to her favorite item among herself, her neighbors, and her descendants remaining in the matching.} There is at least one cycle. For each cycle, allocate the item to the agent who points to it and remove the cycle. Repeat the process until there is no agent left.  
\end{definition}


Both SWN and SWC avoid competition by restricting matching choices, which is not our goal to enlarge the market. We propose a new mechanism called Leave and Share (LS) which satisfies IC, IR and Stable-CC in all networks. Table 3 shows the difference of these mechanisms.

\begin{table}[t]
\centering
\begin{tabular}{|c|c|c|c|}
    \hline
    \multirow{2}{*}{Mechanism} & \multirow{2}{*}{Stable-CC} & \multicolumn{2}{c|}{IC} \\ \cline{3-4} 
     & & Trees & All Networks\\ \hline
    TTC &\ding{53} & \ding{53} & \ding{53} \\ \hline
    \textcolor{red}{SWN} & \textcolor{red}{\ding{51}} & \textcolor{red}{\ding{51}} & \textcolor{red}{\ding{51}} \\ \hline
    SWC &\ding{51}(Trees) & \ding{51} & \ding{53} \\ \hline
    \textcolor{red}{LS}  &\textcolor{red}{\ding{51}} & \textcolor{red}{\ding{51}} & \textcolor{red}{\ding{51}} \\ \hline
\end{tabular}
\label{tab:mechanism}
\caption{Comparison on one-sided matching mechanisms over social networks.}
\vspace{-1.0em}
\end{table}

\subsection{Leave and Share}

Leave and Share uses SWN as a base and adds a natural sharing process to enlarge agents' selection space, trying to provide a better allocation. Firstly, agents are matched by rounds in a protocol that resembles SWN under a strategy-proof order. This guarantees that inviters are not worse off. Then, we share the neighbors of the left agents in this round by connecting their neighbors to each other, thus their neighbors can have new neighbors in the next round. This dynamic neighbor set update comes naturally because a matched cycle does not care how the remaining neighbors will be matched. Also, their remaining neighbors cannot prevent this sharing, and neither can the other remaining agents.

\begin{figure}[t]
\centering

\begin{tikzpicture}[scale=0.3, sibling distance=5em,
  every node/.style = {scale=1, shape=circle, draw, align=center},
    outline/.style={draw=#1,thick,fill=#1!100}]
  every draw/.style = {scale=1}
  \node[] (node1) at (0,-3.5) {1};
  \node[] (node2) at (0,0) {2};
  \node[] (node3) at (4,0) {3};
  \node[] (node4) at (8,0) {4};
  \node[] (node5) at (12,0) {5};
  \node[] (node6) at (12,-3.5) {6};
  \draw[latex-latex] (node1)--(node2);
  \draw[latex-latex] (node2)--(node3);
  \draw[latex-latex] (node3)--(node4);
  \draw[latex-latex] (node4)--(node5);
  \draw[latex-latex] (node5)--(node6);
  \draw[-latex,red] (node3).. controls(6,1) ..(node4);
  \draw[-latex,red] (node4).. controls(6,-1) ..(node3);
  \draw[-latex,red] (node2).. controls(6,2)..(node5);
  \draw[-latex,red] (node5).. controls(6,-2)..(node2);
  \draw[-latex,red] (node1).. controls(6,-2) ..(node6);
  \draw[-latex,red] (node6).. controls(6,-3.5) ..(node1);

\end{tikzpicture}

\caption{Preferences are $h_6 \succ_1 h_1 \succ_1 \cdots, \ \ \  h_5 \succ_2 h_2 \succ_2 \cdots, \ \ \ h_4 \succ_3 h_3 \succ_3 \cdots, \ \ \ h_3 \succ_4 h_4 \succ_4 \cdots, \ \ \ h_2 \succ_5 h_5 \succ_5 \cdots, \ \ \ h_1 \succ_6 h_6 \succ_6 \cdots$. One allocation for the agents $1$ to $6$ is $( h_6,h_5,h_4,h_3,h_2,h_1 )$.}
\label{fig:value_ls}
\end{figure}

To see the value of our mechanism, consider the example given in Figure~\ref{fig:value_ls}, where only agents 3 and 4 can exchange with each other in both SWN and SWC. The rest of the agents will end up with their own items. However, agents 3 and 4 will not block the exchange for agents 2 and 5 once they get their preferred items. After agents 3 and 4 are matched and \textbf{Leave}, we \textbf{Share} their remaining neighbors then agents 2 and 5 can swap. Similarly, after agents 2 and 5 leave, agents 1 and 6 can be matched as well. The process of Leave and Share is the name and core of our mechanism.


Before formalizing our mechanism, we introduce two notations to simplify the description. 

\begin{definition}
Given a set $A \subseteq N$, we say $f_i(A)= j\in A$ is $i$'s favorite agent in $A$ if for any agent $k\in A, h_j\succeq'_i h_{k}$. 
\end{definition}

\begin{definition}
An ordering of agents is a one-to-one function $\mathcal {P}:\mathbb {N}^{+}\to N$, where agent $\mathcal {P}(i)$ is the $i^{th}$ agent in the ordering. Agents in $\mathcal{P}$ are sorted in ascending order by the length of the shortest path from agent set $N_0$ to them. Especially, for any agent $i \in N_0$, its shortest path length
is $0$. When multiple agents have the same length of the shortest path, we use a random tie-breaking.

\end{definition}

\begin{framed}
 \noindent\textbf{Leave and Share (LS)}
  
 \begin{enumerate}
  \item Initialize $N_{out}=\emptyset$ and an empty stack $S$. Define the top and bottom of $S$ as $S_{top}$ and $S_{bottom}$ respectively, and let $R_{i}=r_{i}'\cup \{ S_{bottom} , i\}$.
  \item While $N_{out} \neq N$:
  \begin{enumerate}
    \item Find the minimum $t$ such that $\mathcal{P}(t) \notin N_{out}$. Push $\mathcal{P}(t)$ into $S$. 
    
      
    \item While $S$ is not empty:
        \begin{enumerate}
            \item While $f_{S_{top}}(R_{S_{top}}) \notin S$, push $f_{S_{top}}(R_{S_{top}})$ into $S$.
            \item Pop off all agents from $S_{top}$ to $f_{S_{top}}(R_{S_{top}})$, who already formed a trading cycle $C$ following their favorite agents. Allocate each agent $i\in C$ the item $h_{f_i(R_i)}$ . Add $C$ to $N_{out}^t$.
            \item Update the neighbor set of $C$'s remaining neighbors by removing $C$, i.e., for all $j\in \bigcup_{i\in C} r_i' \setminus N_{out}^t$, set $r_j' = r_j' \setminus C$. 
         \end{enumerate}
    \item Add $N_{out}^t$ to $N_{out}$. Let all remaining neighbors of $N_{out}^t$ connect with each other, i.e., they become neighbors of each other. That is, let $X = \bigcup_{i\in N_{out}^t} r_i' \setminus N_{out}^t$ and for all $j\in X$, set $r_j' = r_j' \cup X$. 
    
    \end{enumerate}
  
 \end{enumerate}
\end{framed}
In LS, we first define an order $\mathcal {P}$ which depends on each agent's shortest distance to the initial agent set. Under this order, the first while loop (step $2$) guarantees that the agent pushed into the stack is the remaining agent with the smallest order, and all agents are matched (including self-match) in the end. A new round begins each time the stack empties. 

In the Leave stage, each agent that is pushed into the stack pushes her (current) favorite neighbor into the stack (step (a)). If her favorite agent is already in the stack, we pop all the agents between herself and her favorite agent to form a trading cycle. Specially, we allow the agent to choose the agent at the bottom of the stack as her favorite, which leads to the pop of all the agents in the stack (step (b)). 

Once the stack is empty, the mechanism enters the Share stage and updates the neighbor set of the remaining agents (step (c)). All the neighbors of the left agents become new neighbors to each other. In the next Leave stage, they can choose their favorite neighbors in a larger neighbor set. 



We illustrate how LS executes by an example. Consider the social network in Figure~\ref{fig:sn}(a). The ordering is given as $\mathcal{P}=(1,2,3,4,5,6,7,9,8)$, $N_0={1}$. The type profile $\theta$ is given in Table~\ref{tab:theta}. The changes of $N_{out}^t$ and $r_i'$ in each turn are shown in Table~\ref{tab:neigh}.  Figure~\ref{fig:sn}(b) to Figure~\ref{fig:sn}(d) show the process of Leave and Share,  
which runs as follows.

\begin{table}[t]
    \centering

    \begin{tabular}{|c|c|l|}
\hline
i & $\succ_i$  & $r_i$ \\ \hline
1 & $h_2 \succ h_4 \succ h_3 \succ h_1 \succ \cdots$ & 2 \\ \hline
2 & $h_4 \succ h_3 \succ h_5 \succ h_2 \succ \cdots$ & 1,3,4 \\ \hline
3 & $h_8 \succ h_4 \succ h_1 \succ h_3 \succ \cdots$ & 2,7 \\ \hline
4 & $h_5 \succ h_1 \succ h_3 \succ h_4 \succ \cdots $ & 2,5,6,9 \\ \hline
5 & $h_2 \succ h_6 \succ h_9 \succ h_5 \succ \cdots$ & 4,6,8 \\ \hline
6 & $h_4 \succ h_1 \succ h_8 \succ h_6 \succ \cdots$ & 4,5 \\ \hline
7 & $h_8 \succ h_3 \succ h_6 \succ h_7 \succ \cdots$ & 3 \\ \hline
8 & $h_9 \succ h_5 \succ h_1 \succ h_8 \succ \cdots$ & 5 \\ \hline
9 & $h_7 \succ h_4 \succ h_6 \succ h_9 \succ \cdots$ & 4 \\ \hline
\end{tabular}

\caption{The preference and neighbors of each agent. We omit those items which rank lower than their own items.}
\label{tab:theta}
\end{table}
\begin{figure}[ht]

\centering

\subfigure[]{
\centering
\begin{minipage}[t]{0.35\linewidth}

\begin{tikzpicture}[scale=0.18, sibling distance=0em,
  every node/.style = {scale=0.5, shape=circle, draw, align=center},
    outline/.style={draw=#1,thick,fill=#1!100}]
  every draw/.style = {scale=2}
  
  \node (node1) at (9,10) {1};
  \node (node2) at (6.5,7) {2};
  \node (node3) at (11.5,7) {3};
  \node (node4) at (6.5,3) {4};
  \node (node5) at (11.5,3) {5};
  \node (node6) at (9,0) {6};
  \node (node7) at (14,4) {7};
  \node (node8) at (14,0) {8};
  \node (node9) at (4,0) {9};
  \draw[latex-latex,very thin] (node1)--(node2);
  \draw[latex-latex,very thin] (node2)--(node3);
  \draw[latex-latex,very thin] (node2)--(node4);
  \draw[latex-latex,very thin] (node4)--(node5);
  \draw[latex-latex,very thin] (node4)--(node6);
  \draw[latex-latex,very thin] (node5)--(node6);
  \draw[latex-latex,very thin] (node3)--(node7);
  \draw[latex-latex,very thin] (node4)--(node9);
  \draw[latex-latex,very thin] (node5)--(node8);
\end{tikzpicture}

\end{minipage}
}
\subfigure[]{
\centering
\begin{minipage}[t]{0.35\linewidth}

\begin{tikzpicture}[scale=0.18, sibling distance=0em,
  every node/.style = {scale=0.5, shape=circle, draw, align=center},
    outline/.style={draw=#1,thick,fill=#1!100}]
  every draw/.style = {scale=2}
  
  \node (node1) at (9,10) {1};
  \node (node2) at (6.5,7) {2};
  \node (node3) at (11.5,7) {3};
  \node[red] (node4) at (6.5,3) {4};
  \node[red] (node5) at (11.5,3) {5};
  \node[red] (node6) at (9,0) {6};
  \node (node7) at (14,4) {7};
  \node (node8) at (14,0) {8};
  \node (node9) at (4,0) {9};
  \draw[latex-latex,very thin] (node1)--(node2);
  \draw[latex-latex,very thin] (node2)--(node3);
  \draw[latex-latex,very thin] (node2)--(node4);
  \draw[latex-latex,very thin] (node4)--(node5);
  \draw[latex-latex,very thin] (node4)--(node6);
  \draw[latex-latex,very thin] (node5)--(node6);
  \draw[latex-latex,very thin] (node3)--(node7);
  \draw[latex-latex,very thin] (node4)--(node9);
  \draw[latex-latex,very thin] (node5)--(node8);
  \draw[-latex,red] (node1).. controls(6.5,10)..(node2);
  \draw[-latex,red] (node2).. controls(4,5)..(node4);
  \draw[-latex,red] (node4).. controls(9,5)..(node5);
  \draw[-latex,red] (node5).. controls(11.5,0)..(node6);
  \draw[-latex,red] (node6).. controls(6.5,0)..(node4);

\end{tikzpicture}

\end{minipage}
}

\subfigure[]{
\centering
\begin{minipage}[t]{0.35\linewidth}

\begin{tikzpicture}[scale=0.18, sibling distance=0em,
  every node/.style = {scale=0.55, shape=circle, draw, align=center},
    outline/.style={draw=#1,thick,fill=#1!100}]
  every draw/.style = {scale=2}
  
  \node[red] (node1) at (9,10) {1};
  \node[red] (node2) at (6.5,7) {2};
  \node[red] (node3) at (11.5,7) {3};
  \node[dashed] (node4) at (6.5,3) {4};
  \node[dashed] (node5) at (11.5,3) {5};
  \node[dashed] (node6) at (9,0) {6};
  \node (node7) at (14,4) {7};
  \node (node8) at (14,0) {8};
  \node (node9) at (4,0) {9};
  \draw[latex-latex,very thin] (node1)--(node2);
  \draw[latex-latex,very thin] (node2)--(node3);
  \draw[latex-latex,very thin] (node2)--(node4);
  \draw[latex-latex,very thin] (node4)--(node5);
  \draw[latex-latex,very thin] (node4)--(node6);
  \draw[latex-latex,very thin] (node5)--(node6);
  \draw[latex-latex,very thin] (node3)--(node7);
  \draw[latex-latex,very thin] (node4)--(node9);
  \draw[latex-latex,very thin] (node5)--(node8);
  \draw[-latex,red] (node1).. controls(6.5,10)..(node2);
  \draw[-latex,red] (node2).. controls(9,5.5)..(node3);
  \draw[-latex,red] (node3).. controls(11.5,10)..(node1);

\end{tikzpicture}

\end{minipage}
}
\subfigure[]{
\centering
\begin{minipage}[t]{0.35\linewidth}

\begin{tikzpicture}[scale=0.18, sibling distance=0em,
  every node/.style = {scale=0.5, shape=circle, draw, align=center},
    outline/.style={draw=#1,thick,fill=#1!100}]
  every draw/.style = {scale=1,thin}
  
  \node[dashed] (node1) at (9,10) {1};
  \node[dashed] (node2) at (6.5,7) {2};
  \node[dashed] (node3) at (11.5,7) {3};
  \node[dashed] (node4) at (6.5,3) {4};
  \node[dashed] (node5) at (11.5,3) {5};
  \node[dashed] (node6) at (9,0) {6};
  \node[red] (node7) at (14,4) {7};
  \node[red] (node8) at (14,0) {8};
  \node[red] (node9) at (4,0) {9};
  \draw[latex-latex,very thin] (node1)--(node2);
  \draw[latex-latex,very thin] (node2)--(node3);
  \draw[latex-latex,very thin] (node2)--(node4);
  \draw[latex-latex,very thin] (node4)--(node5);
  \draw[latex-latex,very thin] (node4)--(node6);
  \draw[latex-latex,very thin] (node5)--(node6);
  \draw[latex-latex,very thin] (node3)--(node7);
  \draw[latex-latex,very thin] (node4)--(node9);
  \draw[latex-latex,very thin] (node5)--(node8);
  \draw[-latex,red] (node7)--(node8);
  \draw[-latex,red] (node8).. controls(9,-2)..(node9);
  \draw[-latex,red] (node9).. controls(5,5) ..(node7);

\end{tikzpicture}

\end{minipage}
}

\caption{A running example of LS. Red nodes form the trading cycles, and red arrows point to their favorite neighbors. Dashed nodes denote agents who left the game.}

\label{fig:sn}

\end{figure}
\begin{table}[ht]
    \centering
    
\scalebox{0.67}{
\begin{tabular}{|c|c|c|c|c|c||c|}
\hline
& \multicolumn{3}{|c|}{t=1} & \multicolumn{2}{|c|}{t=7} & Allocation\\ \hline
$N_{out}^t$  & $\emptyset $ & $ \lbrace4,5,6\rbrace $ & $\lbrace1,2,3,4,5,6\rbrace$  & $\emptyset$ & $\lbrace7,8,9\rbrace$ & $\pi_i(\theta)$\\ \hline
$r_1'$  & $\lbrace2\rbrace$ & $ \lbrace2\rbrace $ & $-$ &$-$ &$-$ &$\pi_1(\theta)=h_2$\\ \hline
$r_2'$  & $\lbrace1,3,4\rbrace$ & $ \lbrace1,3\rbrace $ & $-$ &$-$ &$-$ &$\pi_2(\theta)=h_3$\\ \hline
$r_3'$  & $\lbrace2,7\rbrace$ & $ \lbrace2,7\rbrace $ & $-$ &$-$ &$-$&$\pi_3(\theta)=h_1$\\ \hline
$r_4'$  & $\lbrace2,5,6,9\rbrace$ &$-$  &$-$  &$-$ &$-$ &$\pi_4(\theta)=h_5$\\ \hline
$r_5'$  & $\lbrace4,6,8\rbrace$ &$-$ &$-$ &$-$ &$-$ &$\pi_5(\theta)=h_6$\\ \hline
$r_6'$  & $\lbrace4,5\rbrace$ & $-$ & $-$ &$-$ &$-$ &$\pi_6(\theta)=h_4$\\ \hline
$r_7'$  & $\lbrace3\rbrace$ & $ \lbrace3\rbrace $  & $\emptyset$ & $\lbrace8,9\rbrace$  & $-$ &$\pi_7(\theta)=h_8$\\ \hline
$r_8'$  & $\lbrace5\rbrace$ & $ \emptyset $ &   $\emptyset$ &  $\lbrace7,9\rbrace$  & $-$ &$\pi_8(\theta)=h_9$\\ \hline
$r_9'$  & $\lbrace4\rbrace$ & $ \emptyset $ &   $\emptyset$ &  $\lbrace7,8\rbrace$  & $-$ &$\pi_9(\theta)=h_7$\\ \hline
\end{tabular}
}

\centering
\caption{Updates of neighbor set for each agent. The allocation is shown in the rightmost column.}
\label{tab:neigh}
\end{table}

\begin{enumerate}
    \item Starting at $\mathcal{P}(1)$, push $\mathcal{P}(1)=1$ into the stack. 
$S_{top}=1$, $f_1(R_1)=2\notin S$. Push agent 2 into the stack. $S_{top}=2$, $f_2(R_2)=4\notin S$. Push agent 4 into the stack. $S_{top}=4$, $f_4(R_4)=5\notin S$. Push agent 5 into the stack. $S_{top}=5$, $f_5(R_5)=6\notin S$. Push agent 6 into the stack. $S_{top}=6$, $f_6(R_6)=4$, since $4 \in S$, pop off all agents from agent 6 to agent 4 to form the trading cycle $C=\{4,5,6\}$ and $N_{out}^{1}=\{4,5,6\}$. 
Remove $C$ and update $r_2'=\{1,3\}$, $r_8'=\emptyset$, $r_9'=\emptyset$.

    \item Based on the new neighbor set, $S_{top}=2$, $f_2(R_2)=3\notin S$. Push agent 3 into the stack. $S_{top}=3$, $f_3(R_3)=1$. Since $1 \in S$, pop off all agents from agent 3 to agent 1 to get $C=\{1,2,3\}$ and $N_{out}^{1}=\{1,2,3,4,5,6\}$. 
    Remove $C$ and update $r_7'=\emptyset$.

    Now that the stack is empty, update $N_{out}=N_{out}^1$ and connect the remaining neighbors of $N_{out}^1$ with each other, i.e., update $r_7'=\{8,9\}$, $r_8'=\{7,9\}$, $r_9'=\{7,8\}$.
    \item Find new $t=7$ and push $\mathcal{P}(7)=7$ into the stack. $S_{top}=7$, $f_7(R_7)=8\notin S$. Push agent 8 into the stack.  $S_{top}=8$, $f_8(R_8)=9\notin S$. Push agent 9 into the stack.  $S_{top}=9$, $f_9(R_9)=7$. Since $7 \in S$, pop off all agents from agent 9 to agent 7 to get $C=\{7,8,9\}$ and $N_{out}^{7}=\{7,8,9\}$.

    \item The stack is empty again, add $N_{out}^{7}=\{7,8,9\}$ to $N_{out}$, and now we have $N_{out}=N=\{1,2,3,4,5,6,7,8,9\}$. The mechanism terminates. The final allocation of the agents is shown in Table~\ref{tab:neigh}.
\end{enumerate}

\section{Properties of Leave and Share}\label{sec:pr}

In this section, we prove that LS is IR, IC and Stable-CC.

\begin{theorem}
For any ordering $\mathcal{P}$, LS is IR.
\end{theorem}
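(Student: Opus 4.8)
The plan is to exploit a single structural invariant of LS: every agent always belongs to her own current selection set. Indeed, $R_i = r_i' \cup \{S_{bottom}, i\}$ by definition, and the neighbor updates performed in steps (b)-(iii) and (c) only insert or delete \emph{other} agents into $r_i'$, so $i \in R_i$ holds throughout the run. Consequently, whenever $i$ points to her favorite agent $f_i(R_i)$, the item $h_{f_i(R_i)}$ is weakly preferred, under $i$'s reported preference, to $h_i$, because $h_i$ is one of the candidates being compared. Granting that $i$ is eventually allocated exactly the item $h_{f_i(R_i)}$ she last pointed to, IR for a truthful agent is then immediate.

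First I would check that LS is well defined, terminates, and allocates an item to every qualified agent. For the inner loop (step (b)): each push in (b)-(i) adds an agent not currently on the stack (by the loop guard), so the stack strictly grows; since $N$ is finite, after finitely many pushes the guard fails, i.e.\ $f_{S_{top}}(R_{S_{top}}) \in S$, and a nonempty cycle $C$ is popped in (b)-(ii), permanently moving $|C| \ge 1$ agents into $N_{out}^t$. Hence (b) halts with an empty stack after finitely many iterations. For the outer loop (step 2): the agent $\mathcal{P}(t)$ pushed in (a) can leave the stack only as part of some cycle, so she lies in $N_{out}^t$ once the stack empties; thus each pass adds at least one new agent to $N_{out}$, and the loop stops after at most $|N|$ passes with $N_{out} = N$. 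Therefore every qualified agent is processed exactly once and receives some $h_{f_i(R_i)}$. An unqualified agent $i$ is not listed in $\mathcal{P}$, so $\pi_i = h_i$ and IR holds trivially for her (this is also forced by the definition of a diffusion mechanism).

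The IR inequality then follows in one line. Fix an agent $i$ reporting truthfully, so $\succ_i' = \succ_i$, and let the others report arbitrary $\theta_{-i}'$. If $i$ is not qualified under $(\theta_i,\theta_{-i}')$, then $\pi_i(\theta_i,\theta_{-i}') = h_i \succeq_i h_i$. Otherwise, at the iteration of (b)-(ii) in which $i$ is popped she is allocated $h_{f_i(R_i)}$ with $i \in R_i$; since $f_i(R_i)$ is by definition a $k \in R_i$ with $h_k \succeq_i' h_\ell$ for all $\ell \in R_i$, instantiating $\ell = i$ yields $\pi_i(\theta_i,\theta_{-i}') = h_{f_i(R_i)} \succeq_i' h_i = h_i$, hence $\pi_i(\theta_i,\theta_{-i}') \succeq_i h_i$. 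This is exactly IR. Note that the argument uses nothing about $\mathcal{P}$ beyond the fact that it enumerates the qualified agents so that each is eventually chosen in step (a); hence the result holds for every ordering $\mathcal{P}$. I do not anticipate a real obstacle: the only care required is the bookkeeping in the termination argument and the claim that each qualified agent is popped exactly once; the preference comparison itself is a trivial consequence of $i \in R_i$.
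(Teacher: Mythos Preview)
Your proposal is correct and follows essentially the same idea as the paper: since $i\in R_i$ always, the item $h_{f_i(R_i)}$ allocated to $i$ is weakly preferred to $h_i$ under truthful reporting. You are more explicit than the paper about the invariant $i\in R_i$ and add a termination/well-definedness check that the paper omits, but the core argument is the same.
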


\begin{proof}
In LS, agent $i$ leaves only when she gets an item $h_j$. Agent $i$ can always choose herself as her favorite agent, then LS will allocate $h_i$ to $i$. Thus, LS is IR.
\end{proof}

\begin{theorem}
For any ordering $\mathcal{P}$, LS is IC.
\end{theorem}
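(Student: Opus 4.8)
The plan is to establish IC in the customary two-step form. Fixing an agent $i$ and the others' reports $\theta_{-i}'$, for any misreport $\theta_i'=(\succ_i',r_i')$ I will show
\[
\pi_i((\succ_i,r_i),\theta_{-i}')\;\succeq_i\;\pi_i((\succ_i,r_i'),\theta_{-i}')\;\succeq_i\;\pi_i((\succ_i',r_i'),\theta_{-i}'),
\]
so that it suffices to treat preference manipulation (the right inequality, with the reported neighbour set frozen at an arbitrary $r_i'$) and neighbour manipulation (the left inequality, with the reported preference frozen at the truthful $\succ_i$) in isolation.

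\emph{Preference manipulation.} The first step is a state-invariance lemma: running LS under $(\succ_i,r_i',\theta_{-i}')$ and under $(\succ_i',r_i',\theta_{-i}')$ in lockstep, the two executions are literally identical up to and including the instant $i$ is first pushed onto the stack $S$. This holds because the mechanism consults $i$'s reported preference only via $f_i(\cdot)$, and $f_i(\cdot)$ is evaluated only while $i$ sits at the top of $S$ (or while $i$ is being popped inside a cycle, which presupposes she was already pushed); hence within each round everything preceding $i$'s first push is preference-independent, and if $i$ is not pushed in a round then that round terminates in the same global state (same cycles, same neighbour removals, same Share step), so we may induct on rounds. In particular the round $t^{\star}$ in which $i$ is pushed, and the set $R_i^{\star}=r_i'\cup\{S_{bottom},i\}$ she has access to at that moment, are independent of her reported preference. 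Next I observe that once $i$ is pushed in round $t^{\star}$ she is necessarily matched before that round ends (the outer loop does not advance until $S$ is empty), and that during round $t^{\star}$ the set $R_i$ can only shrink (neighbours are deleted when cycles leave, whereas the Share step that enlarges neighbour sets runs only between rounds). Therefore every item $i$ can possibly obtain, under any reported preference, is $h_j$ for some $j\in R_i^{\star}$. It then remains to argue that truthfully reporting $\succ_i$ picks out her $\succ_i$-best attainable item; for this I transplant the classical Top Trading Cycle strategy-proofness argument into round $t^{\star}$: whenever $i$ is at the top and points to some $j$, either a trading cycle through $i$ closes (and $i$ receives $h_j$) or the forced continuation from $j$ resolves into cycles strictly above $i$, removing a set of agents that is determined by the current state and $j$ alone, after which $i$ is back at the top of $S$ with a weakly smaller $R_i$; always pointing at her favourite still-available neighbour then secures her $\succ_i$-best reachable item.

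\emph{Neighbour manipulation.} Here I reduce, by removing hidden neighbours one at a time, to comparing the run under some $r_i'$ with the run under $r_i'\setminus\{k\}$. Deleting the arc $\langle i,k\rangle$ from $G(\theta')$ leaves the relative $\mathcal{P}$-order of all still-qualified agents unchanged, only disqualifies agents that reach $i$ solely through $k$, and removes $k$ from $R_i$. I will run a monotone-coupling induction over the execution of LS, carrying the invariant that at corresponding moments $i$'s current neighbour set and the set of agents still in play dominate in the ``arc-present'' run, and that every Share step hands $i$ a superset of neighbours there; feeding this domination into the TTC-style comparison of the previous step yields that $i$'s item is weakly better with the arc present. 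Combining with the preference step gives IC.

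I expect the main obstacle to be the neighbour-manipulation step. Unlike the preference case, where the state-invariance lemma keeps the two runs identical until $i$ is pushed, here the two executions diverge immediately: hiding $k$ can change which agents are qualified, in what order cycles form in early rounds, and which neighbour sets get Shared. Making ``corresponding moments'' precise and choosing a domination invariant that provably survives a Leave step, a mid-round neighbour deletion, and a between-round Share step is the crux. A secondary delicate point is the within-round TTC transplant in the preference step: because the agents' neighbour sets shrink as cycles leave, the textbook claim that ``the order in which $i$ probes her options does not change the set of items she can reach'' must be re-derived in this dynamic setting rather than quoted.
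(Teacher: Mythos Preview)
Your preference-manipulation step is essentially the paper's: state invariance up to $i$'s first push, followed by a TTC-style argument inside the round in which $i$ is matched. The paper phrases the latter as a contradiction (if misreporting yields a strictly better $h_{j'}$, then some ``breaking'' cycle $B$ must intersect the cycle $C_{j'}$ that consumed $j'$ in the truthful run, and the agent at which $B$ leaves $C_{j'}$ would never have deviated), but the content is the same.

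The neighbour-manipulation step is where you diverge from the paper and where there is a genuine gap. You assert that ``here the two executions diverge immediately: hiding $k$ can change which agents are qualified, in what order cycles form in early rounds, and which neighbour sets get Shared,'' and this drives you toward a monotone-coupling induction that you yourself flag as the main obstacle. In fact the two executions do \emph{not} diverge until $i$ is pushed---the state-invariance lemma from your preference step applies verbatim here. The observations you are missing are: (i) every agent with $\mathcal{P}$-position at most $\mathrm{pos}(i)$ has a shortest path from $N_0$ that avoids $i$'s outgoing arcs, so its distance and position are unaffected by $r_i'$; (ii) within a round the next push depends only on $r_j',\succ_j'$ for the current $S_{top}=j$, not on $\mathcal{P}$, and these are unchanged for $j\neq i$; (iii) any agent disqualified by hiding $k$ is reachable from $N_0$ only through $i$, hence cannot enter the stack before $i$ does, because the chain $S_{bottom}\to\cdots\to S_{top}$ is itself an $N_0$-path avoiding $i$. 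Thus the runs under $r_i$ and $r_i'$ coincide up to the moment $i$ is pushed; from that point, removing $k$ from $r_i$ has the same effect on $i$'s subsequent choices as demoting $h_k$ below $h_i$ in $\succ_i$, so the neighbour case reduces to the preference case already handled. This is exactly the paper's route, and it eliminates the coupling argument entirely.

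Relatedly, your claim that deleting $\langle i,k\rangle$ ``leaves the relative $\mathcal{P}$-order of all still-qualified agents unchanged'' is false: $k$ and agents downstream of $k$ may remain qualified via a longer alternate path, and their distances---hence their positions relative to agents strictly between $i$ and them---can shift. Only positions up to and including $\mathrm{pos}(i)$ are guaranteed invariant, which is precisely what the argument above needs.
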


\begin{proof}
Since each agent $i$'s type consists of two parts, her preference $\succ_i$ and her neighbor set $r_i$, we will prove misreporting neither $\succ_i$ nor $r_i$ can improve her allocation. 

\noindent \textbf{Misreport on $\succ$:} For agent $i$, we fix her reported neighbor set as $r_i'$. Her real preference is $\succ_i$ and reported preference is $\succ_i'$. Now we compare her allocation $\pi_i(( \succ_i, r_i' ), \theta_{-i}) = h_j$ with $\pi_i(( \succ_i', r_i' ), \theta_{-i}) = h_{j'}$. 

Since $\mathcal{P}$ is based on the minimum distance, which is irrelevant to agents' preferences, we only need to prove that $h_j \succeq_i h_{j'}$ for all agents for a given order.

Before $i$ is pushed into the stack, all trading cycles are irrelevant to $\succ_i'$( $i$ has not been preferred by the agents in the stack before, so $\succ_i'$ is not used at all). Thus we only consider the situation when agent $i$ is pushed into the stack and then $\succ_i'$ can decide which agent after $i$ is pushed into the stack. 


When $i$ is on the top of the stack, the next pushed agent $f_i(R_i)$ is determined by $\succ_i'$. Agent $i$ can be allocated with $h_{j'}$ only when there is a trading cycle with $i$. Assume that $h_{j'} \succ_i h_j$, i.e., misreporting $\succ_i$ gives $i$ a better item. We will show this leads to a contradiction. If $i$ reported $\succ_i$ truthfully, then $i$ would first choose $j'$ before $j$ ($j'$ is pushed into the stack first), since $i$ did not get $h_{j'}$, which means $j'$ formed a cycle $C_{j'}$ without $i$. If reporting $\succ_i'$, $i$ is matched with $j'$, then it must be the case that there exists another trading cycle $B$ which breaks the cycle $C_{j'}$. Otherwise, whenever $i$ points to $j'$, $j'$ will form the original cycle $C_{j'}$ as it is independent of $i$'s preference. The only possibility for $i$ to achieve this is by pointing her favorite agent under the false preference $\succ_i'$. By doing so, $i$ can force other agents to leave earlier with different cycles including $B$. Next, we will show that it is impossible for $B$ to break $C_{j'}$.

If $B$ can actually break $C_{j'}$, there must be an overlap between $B$ and $C_{j'}$. Assume that $x$ is the node where $B$ joins $C_{j'}$ and $y$ is the node where $B$ leaves $C_{j'}$ ($x$ and $y$ can be the same node). For node $y$, her match in $B$ and $C_{j'}$ cannot be the same (the model assumes strict preference), and no matter when $y$ is pushed into the stack, both items in $B$ and $C_{j'}$ are still there. Assume the matching in $C_{j'}$ is her favorite, then cycle $B$ will never be formed. This contradicts to $h_{j'} \succ_i h_j$, so reporting $\succ_i$ truthfully is a dominant strategy.

\noindent \textbf{Misreport on $r$:} As the above showed for any reported neighbor set $r_i'$, reporting $\succ_i$ truthfully is a dominant strategy. Next, we further show that under truthful preference report, reporting $r_i$ is a dominant strategy.
That is, for the allocation $\pi_i (( \succ_i, r_i ), \theta_{-i}) = h_j$ and $\pi_i(( \succ_i, r_i' ), \theta_{-i}) = h_{j'}$, we will show $h_j \succeq_i h_{j'}$.

Firstly, we show that the tradings before $i$ being pushed into the stack are irrelevant to $i$'s neighbor set report $r_i'$. 
For all the agents ranked before $i$ in $\mathcal{P}$, their shortest distance is smaller than or equal to $i$'s shortest distance to agent $1$, which means that their shortest paths do not contain $i$ and therefore $r_i'$ cannot change them. Thus, $r_i'$ cannot change the order of all agents ordered before $i$ in $\mathcal{P}$. In addition, agent $i$ could be a cut point to disconnect certain agents $D_i$ from agent $1$, so $r_i'$ can impact $D_i$'s distances and qualification. However, $D_i$ can only be involved in the matching after $i$ is in the stack, as others cannot reach $D_i$ without $i$. 
Hence, before $i$ is pushed into the stack, the tradings only depend on those ordered before $i$ and the agents excluding $D_i$, which are independent of $i$. In fact, the order of the agents pushed into the stack before $i$ is the same no matter what $r_i'$ is. That is, when $i$ is pushed into the stack, the agents, except for $D_i$, remaining in the game is independent of $i$.

Then when $i$ misreports $r_i$, she will only reduce her own options in the favorite agent selection. Whether $r_i'$ disconnects $D_i$ or not, reporting $r_i'$ here is equivalent to modifying $\succ_i$ by disliking neighbors in $r_i\setminus r_i'$. As we have showed, this is not beneficial for the agent. Therefore, reporting $r_i$ truthfully is a dominant strategy, i.e., $h_j \succeq_i h_{j'}$.

Put the above two steps together, we have proved that 
LS is incentive compatible. 
\end{proof}

\begin{theorem}
For any ordering $\mathcal{P}$, LS is Stable-CC.
\end{theorem}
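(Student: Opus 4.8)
The plan is to argue by contradiction: suppose that for some truthful type profile $\theta$ there is a blocking coalition $S$ under complete components for the LS allocation $\pi(\theta)$. By definition $S$ is a complete clique in $G(\theta)$ and admits an internal reallocation $z(\theta)$ with $z_i(\theta) \in H_S$, $z_i(\theta) \succeq_i \pi_i(\theta)$ for all $i \in S$, and strict improvement for some $j \in S$. I want to derive a contradiction from the way LS processes agents in rounds under the ordering $\mathcal{P}$ together with the stack/cycle dynamics.

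The key structural observation is that when $S$ is a clique in the original graph $G(\theta)$, all edges among members of $S$ are present from the very start, and the "Share" step only ever adds edges, never removes the $S$-internal ones before a member leaves. Hence, at every moment before an agent of $S$ is matched, every still-present member of $S$ has every other still-present member of $S$ in its current neighbor set $R_i$. I would order the members of $S$ by the round (and within a round, by the time) at which LS matches them; let $i^\ast$ be the \emph{first} member of $S$ to leave, in cycle $C$. Since $i^\ast$ points to its favorite agent among $R_{i^\ast}$, which at that point contains \emph{all} of $S$, agent $i^\ast$ receives an item at least as good as \emph{any} $h_k$ with $k \in S$ — in particular at least as good as $z_{i^\ast}(\theta)$, because $z_{i^\ast}(\theta) = h_k$ for some $k \in S$ and $h_k$ was still available to be pointed at (no member of $S$ had left yet, so no $S$-item had been taken). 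Thus $\pi_{i^\ast}(\theta) \succeq_{i^\ast} z_{i^\ast}(\theta) \succeq_{i^\ast} \pi_{i^\ast}(\theta)$, forcing $\pi_{i^\ast}(\theta) = z_{i^\ast}(\theta)$; moreover the item $\pi_{i^\ast}(\theta)$ is itself in $H_S$.

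From here I would iterate: remove $i^\ast$ from $S$ and from $H_S$ the item it received, and observe that the restriction of $z$ to $S \setminus \{i^\ast\}$ is still a feasible internal reallocation on the remaining items $H_S \setminus \{\pi_{i^\ast}(\theta)\}$ — this uses that $i^\ast$ got exactly $z_{i^\ast}(\theta)$, so $z$ restricted to the rest is a permutation of the remaining $S$-items. The second member of $S$ to leave, say $i^{\ast\ast}$, at the time it is matched still has all of $S \setminus \{i^\ast\}$ in its neighbor set and all items $H_S \setminus \{\pi_{i^\ast}(\theta)\}$ still available, so by the same favorite-pointing argument it gets something at least as good as $z_{i^{\ast\ast}}(\theta)$, hence exactly $z_{i^{\ast\ast}}(\theta)$. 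Inducting through all of $S$ in the order they leave, we get $\pi_i(\theta) = z_i(\theta)$ for \emph{every} $i \in S$, contradicting the requirement that some $j \in S$ has $z_j(\theta) \succ_j \pi_j(\theta)$. Therefore no blocking coalition under complete components exists, and LS is Stable-CC.

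The main obstacle I anticipate is making the "favorite agent among $R_i$" step fully rigorous within LS's stack mechanics: I must confirm that when an agent is at the top of the stack it indeed points to its genuinely favorite currently-available neighbor (LS uses $f_{S_{top}}(R_{S_{top}})$, which includes itself and $S_{bottom}$, so the clique members are all candidates), and that no $S$-item can have been consumed by an \emph{earlier} cycle before the first $S$-member leaves — this is exactly the claim that the first $S$-agent to be matched still sees all $S$-items, which needs the clique-membership-of-$S$ and the monotonicity of the neighbor sets. A secondary subtlety is the within-round vs.\ across-rounds bookkeeping for $R_i$ and the Share step, but since Share only adds edges this does not affect the argument. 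I would also double-check the edge case where a member of $S$ is matched in a self-cycle, which is covered since $h_i \in H_S$ and IR already gives $\pi_i \succeq_i h_i$.
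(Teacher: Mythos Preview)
Your proposal is correct and follows the same line as the paper: exploit that $S$ is a clique so every unmatched $S$-member always sees every other unmatched $S$-member, then use the favorite-pointing rule of LS to get $\pi_i(\theta) \succeq_i z_i(\theta)$ and hence $\pi_i(\theta) = z_i(\theta)$ for all $i \in S$, contradicting the required strict improvement. Your induction is in fact more explicit than the paper's brief argument; the one spot worth tightening is the ``items $H'$ still available'' claim at later induction steps, which holds because any LS trading cycle meeting $S$ lies entirely in $S$ (an immediate consequence of your step-one reasoning applied to every $S$-member in that cycle).
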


\begin{proof}
For every $S\subseteq N$ and their item set $H_S$. Let the allocation given by LS be $\pi(\theta)$. If there exists a blocking coalition $S$, where $S \subseteq N$ is the node set of a complete component in $G(\theta)$, we have $\forall i \in S, S \subseteq r_i$. A blocking coalition $S$ suggests there exists a $z(\theta)$ such that for all  $i\in S$, $z_i(\theta) \in H_S$,  $z_i(\theta) \succeq_i \pi_i(\theta)$ with at least one $j\in S$ we have $z_j(\theta) \succ_j \pi_j(\theta)$. Therefore, for all $ j \in S$, the blocking coalition guarantees the owner of $z_j(\theta)$ and $j$ are in one trading cycle. This indicates if a trading cycle contains any agent in the coalition, all the agents in the trading cycle are in the coalition. Based on LS, $z_j(\theta) \succeq \pi_j(\theta)$ means the owner of $z_j(\theta)$ will be pushed into the stack before the owner of $\pi_j(\theta)$. Thus, the trading cycle which contains the owner of $z_j(\theta)$ and $j$ can trade by the cycle (i.e.,$\forall i \in S, z_i(\theta)$ = $\pi_i(\theta)$). This contradicts the assumption of existing $j \in S, z_j(\theta) \succ_j \pi_j(\theta)$. Hence, LS is Stable-CC.
\end{proof}

\section{Optimality Analysis}

In this section, we compare our mechanism with TTC, SWN and SWC. Since PO fails to be compatible with IC, IR in the network setting, we define a cardinal index D to measure the performance and run experiments in various random graphs to show the eminence of our mechanism. Although TTC cannot be directly applied in the social network setting, it provides an upper bound for the comparison. The lower bound is given by SWN since agents should be able to swap with their neighbors. We define $\succ_i(j)$ as the $j^{th}$ favorite item of $i$. Assuming that $h_i$ is $\succ_i(j)$ and $\pi_i(\theta)$ is $\succ_i(k)$, where $j\geq k$ for IR property, we define the ascension of $i$ as $d_i=j-k$. The average ascension of agents is defined as $D=\frac{\sum_{i\in N}d_i}{n}$. We use $D$ to measure the average improvement of agents' satisfaction in a one-sided matching mechanism.


\begin{figure}[t]
    \centering
    \includegraphics[scale=0.16]{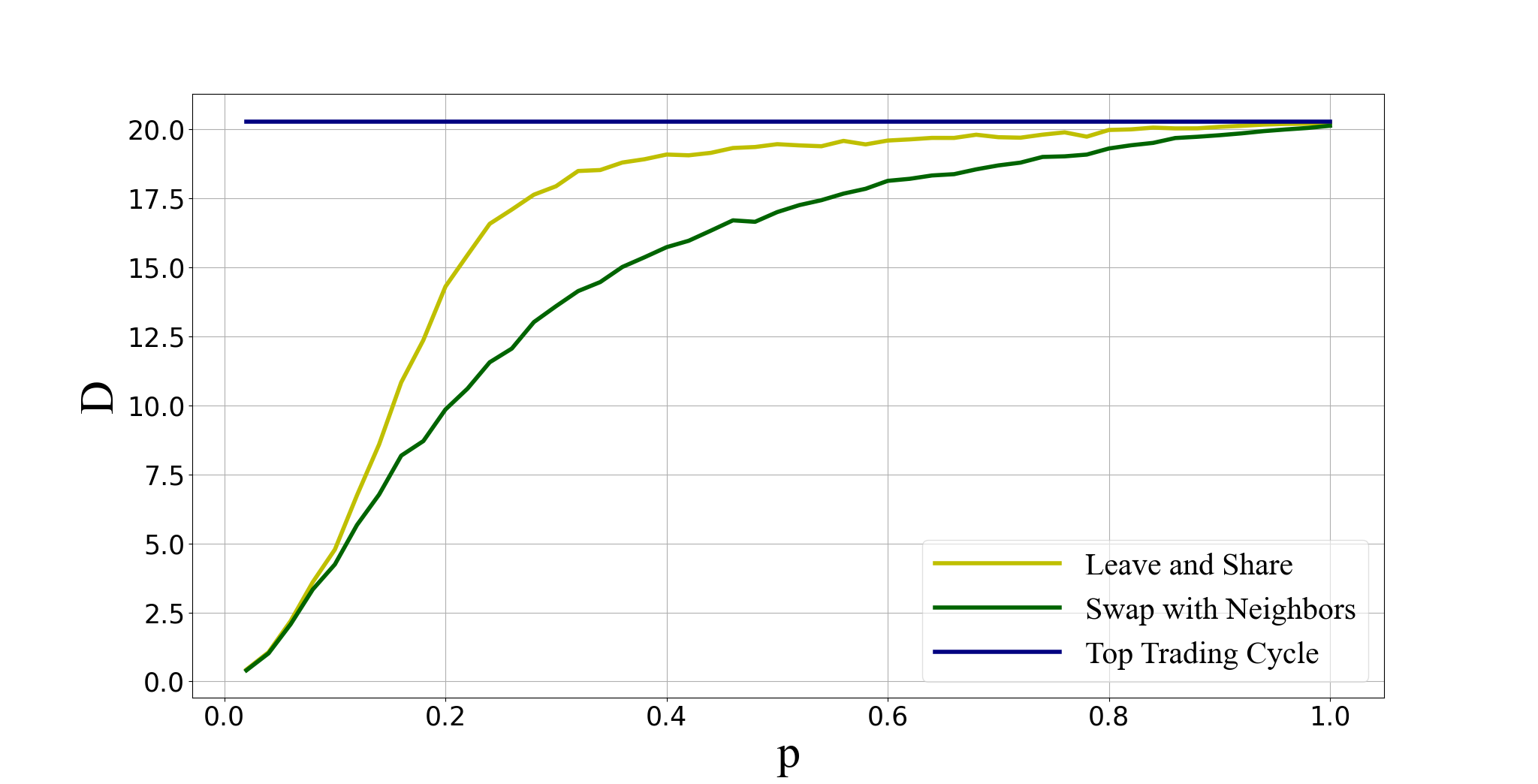}
    \caption{100 graphs generated for each p to see how D changes accordingly. The minimum scale for p is 0.02.}
    \label{fig:fn}
\end{figure}

\begin{figure}[t]
    \begin{minipage}{1\linewidth}
    \centering
            \includegraphics[scale=0.16]{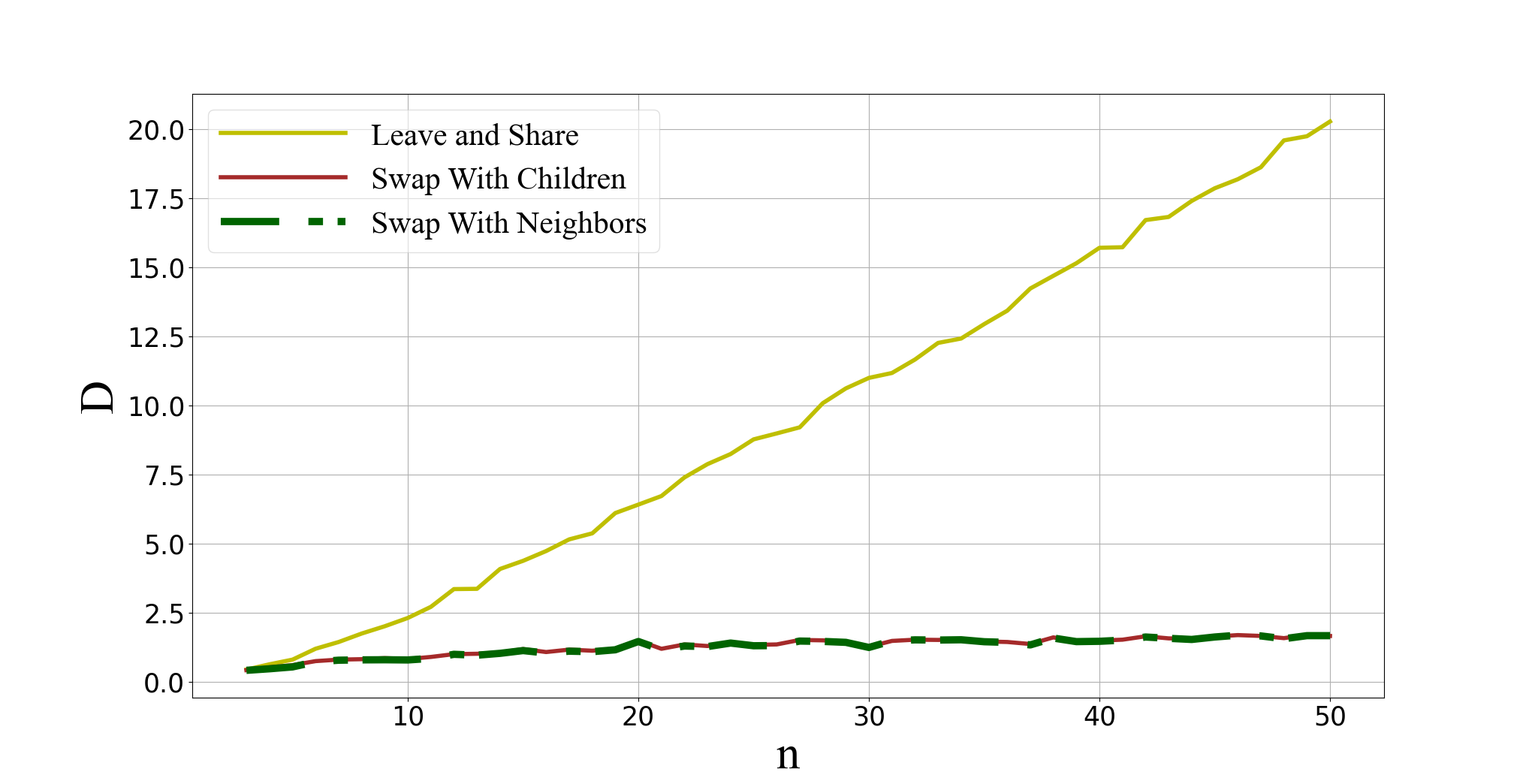}
        \caption{The tree is generated by a uniform distribution $[1,n-1]$ for each node's child node number, $n$ is the size of the tree. For LS, $n$ and $D$ are positively correlated.}
        \label{fig:fd}
    \end{minipage}
    \begin{minipage}{1\linewidth}
        \centering
            \includegraphics[scale=0.16]{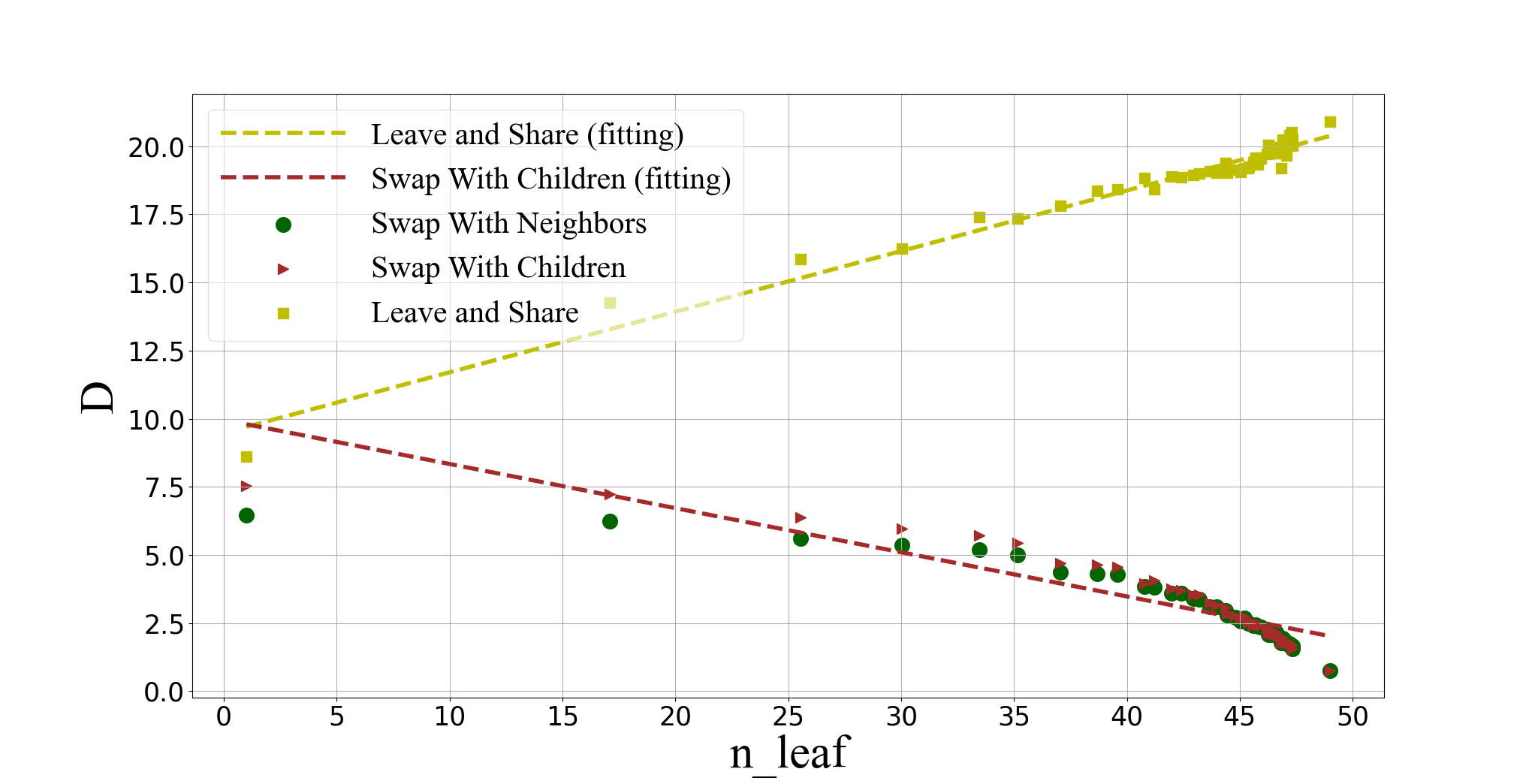}
        \caption{For LS, $n_{leaf}$ and $D$ are positive correlated. For SWN and SWC, $n_{leaf}$ and $D$ are negatively correlated.}

        \label{fig:fe}
    
    \end{minipage}
\end{figure}


Considering the lack of social network patterns in the random graphs, we adopt two special graphs, GIRG~\cite{bringmann2019geometric} and small world~\cite{watts1998collective}. Those graphs depict the cluster phenomenon in reality, and LS performs even better compared to that in the random graphs. Detailed results can be found in appendix B.


To generate random networks, we define the probability of an edge between any two nodes as $p$. A higher $p$ leads to a denser connected graph. Especially, when $p=1$, the graph is complete.
To generate a tree, we use $ub$ to represent the maximum number of child nodes for a tree, and for each node, we uniformly select an integer $i$ from $[ 1, ub ]$ as the number of the node's child nodes.  Beginning at the root node, we create $i$ child nodes for each node by a breadth-first order until the tree reaches size $n$. Agents' preferences are generated randomly from all permutations of the items.

Figure~\ref{fig:fn} shows the performances of three mechanisms.
In this figure, we generate 100 graphs of 50 nodes with fixed but randomly generated preferences and adjust $p$ to see how $D$ changes. When $p$ is close to 1, the performances of LS and SWN are close, and they are the same as TTC when $p=1$. Due to the sharing process, LS converges to TTC faster than SWN. In the other extreme case, when $p$ goes to 0, both LS and SWN have a poorer performance, because there are fewer neighbors to swap or share.

Next, we compare LS with SWC, the other extension of TTC only on trees, in two dimensions including (a) different tree sizes; (b) same tree size, but different tree structures.

In terms of tree size, we generate 100 different trees for each tree size. In Figure~\ref{fig:fd}, LS outperforms SWC and SWN significantly as the incremental tree size. Also, the results showed that SWC and SWN are quite close, because the probability of forming a cycle with more than two agents is small when applying SWC. It requires all agents except for one in a cycle to prefer their parents' item rather than their whole subtrees'. The larger the cycle is, the smaller the probability is. Thus, by comparison to SWN, the improvement of SWC is very limited, while LS takes a big advantage from sharing.

As for tree structures, we fix the tree size and use the number of leaf nodes to indicate the difference between trees.
For a certain tree with $n$ nodes, the lower bound and upper bound of $n_{leaf}$ are $1$ and $n-1$ respectively. We do the same simulation to show the relation between $D$ and $n_{leaf}$. 


In Figure~\ref{fig:fe}, we generate 100 different trees for each $ub$ from 1 to 49 and count the number of each tree's leaf nodes. Then we simulate three mechanisms on each tree. When $n_{leaf}$ is small, SWC and LS are close, because there are few neighbors to share and form a big cycle. With the increase of $n_{leaf}$, LS performs better, because sharing can match nodes in different branches, even for leaf nodes. If allocated by SWC, those leaf nodes can only get their own items.

\section{Conclusion}\label{sec:con}

In this paper, we redefined stability in social networks and showed its tightness by proving impossibilities. We proposed a novel one-sided matching protocol called Leave and Share that satisfies IC, IR and Stable-CC. Our mechanism works in all networks and significantly outperforms other mechanisms. One possible future work is to find an attainable optimality in this new setting and design mechanisms to reach it.

\appendix
\section{Properties for SWN}

\begin{theorem}
SWN is IR.
\end{theorem}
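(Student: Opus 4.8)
The plan is to mirror the argument used for the individual rationality of LS (Theorem~5), exploiting the fact that under SWN every agent always has her own item available as a pointing option. First I would dispose of the trivial case: by the definition of a diffusion one-sided matching mechanism, every unqualified agent $i \notin Q(\theta')$ receives $\pi_i = h_i$, so IR holds vacuously for such agents. It therefore suffices to fix a qualified agent $i$ and show $\pi_i(\theta_i,\theta'_{-i}) \succeq_i h_i$.

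Next I would track the execution of SWN round by round. At the start of each round, every remaining agent points to her favorite item among herself and her remaining neighbors; in particular the candidate set for agent $i$ contains $h_i$ for as long as $i$ has not been removed. An agent is removed precisely when she lies on a trading cycle, and in that event she is allocated exactly the item she points to. Hence, at the moment $i$ leaves the process, she receives her favorite item from a candidate set that includes $h_i$, which immediately yields $\pi_i(\theta_i,\theta'_{-i}) \succeq_i h_i$.

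The only point deserving a line of care is why $h_i$ remains available to $i$ throughout, i.e. that $h_i$ is not handed to some other agent before $i$ leaves. But $h_i$ can be allocated only to an agent who points at it, and that allocation occurs only when that agent sits on a trading cycle; any trading cycle containing the edge pointing at $h_i$ must also contain $i$ herself, since the owner of $h_i$ (namely $i$) is the successor on the cycle of whoever points at $h_i$. Thus $i$ is removed in that same round and still obtains her favorite available item, which is $\succeq_i h_i$. I do not expect a genuine obstacle here; the proof is essentially the SWN specialization of the LS argument, and the only thing to state carefully is the invariant that an agent's own item is always one of her pointing options while she is in the game.
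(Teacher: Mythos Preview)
Your proposal is correct and follows essentially the same idea as the paper's proof: since an agent can always point to herself, the item she receives when she leaves is her favorite among a set containing $h_i$, hence $\succeq_i h_i$. The paper states this in one line without the extra bookkeeping about unqualified agents or the invariant that $h_i$ stays available, but your added care does no harm and the core argument is the same.
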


\begin{proof}
 In SWN, agent $i$ can always choose herself as her favorite agent, then SWN will allocate $h_i$ to $i$. Thus, SWN is IR.   
\end{proof}

\begin{theorem}
SWN is IC.
\end{theorem}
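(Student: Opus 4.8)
The plan is to mirror the two‑part structure of the LS incentive‑compatibility proof, exploiting the fact that SWN is simply TTC run on the directed graph $G(\theta')$ restricted to the qualified agents $Q(\theta')$, with every agent allowed to point at her own item as a fallback (which is exactly what delivers IR). As in that proof, I would first fix an arbitrary reported neighbor set $r_i'$ and show that against it truthful preference reporting is dominant; then, under truthful preferences, show that reporting the full $r_i$ dominates any $r_i' \subseteq r_i$.

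\textbf{Misreport on $\succ_i$.} Fix $\theta_{-i}$ and $r_i'$, and let $h_j = \pi_i((\succ_i, r_i'),\theta_{-i})$ be $i$'s truthful allocation and $h_{j'} = \pi_i((\succ_i', r_i'),\theta_{-i})$ her allocation under a misreport; suppose for contradiction $h_{j'} \succ_i h_j$. Run SWN truthfully. Every trading cycle removed strictly before the round in which $i$ is matched contains no copy of $i$, so $i$'s reported preference plays no role in forming those cycles; hence the sequence of cycles removed before $i$'s round, and the set of items still present when $i$ is matched, are independent of $i$'s report, \emph{unless $i$'s misreport drags her into an earlier cycle}. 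Since $h_{j'} \succ_i h_j$ and $i$ truthfully ends with $h_j$, the owner of $h_{j'}$ must have been removed, before $i$'s round, inside a cycle $C$ with $i \notin C$. For a misreport to give $i$ the item $h_{j'}$, it must keep $h_{j'}$ present long enough to close a cycle through $i$, hence it must prevent $C$ from forming, which can only occur if $i$'s false pointing pulls some node of $C$ into an earlier cycle $B$. I would then reuse the LS argument verbatim: picking a node $y$ of $C$ that lies on $B$, the item $y$ gets in $B$ differs from the one it gets in $C$, both are present whenever $y$ is processed, and $y$'s favorite remaining item only improves when evaluated earlier; so $y$ would never point along $B$ if $C$'s item is weakly preferred, while if $y$ strictly prefers $B$'s item then $C$ could not have been $y$'s outgoing choice under truthful play. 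Either branch is a contradiction, so $h_j \succeq_i h_{j'}$.

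\textbf{Misreport on $r_i$.} Now fix truthful $\succ_i$ and compare $h_j = \pi_i((\succ_i, r_i),\theta_{-i})$ with $h_{j'} = \pi_i((\succ_i, r_i'),\theta_{-i})$ for $r_i' \subseteq r_i$. Reporting $r_i'$ can do only two things: it shrinks the set $r_i' \cup \{i\}$ over which $i$ picks her favorite agent, and it may disconnect from $N_0$ a set $D_i$ of agents reachable only through $i$, changing their qualification. As in the LS proof, the agents and items that are relevant before $i$ is processed are exactly those reachable from $N_0$ while avoiding $i$, so they are untouched by $r_i'$, and $D_i$ cannot influence anything until after $i$ has been matched. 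Therefore, whether or not $D_i$ gets disconnected, hiding the neighbors in $r_i \setminus r_i'$ is equivalent, from $i$'s viewpoint, to keeping $r_i$ but ranking the corresponding items (and any items reachable only through them, which $i$ could not obtain anyway) below $h_i$ — a preference misreport, which by the first part cannot help. Hence $h_j \succeq_i h_{j'}$, and the two parts together give that SWN is IC.

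The exchange argument for the preference case is the routine, TTC‑style step; the part I expect to need the most care is the neighbor‑report case, specifically justifying that cutting a whole subgraph $D_i$ off from $N_0$ cannot benefit $i$. This rests on the structural observation that the execution of SWN up to and including $i$'s own match depends only on the portion of the reported network reachable from $N_0$ without passing through $i$, so every effect of hiding an edge is confined to rounds no earlier than $i$'s round and is there dominated by the ``it is like disliking those items'' reduction.
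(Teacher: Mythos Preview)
Your two–part decomposition and both conclusions are correct, but the argument you import from the LS proof is considerably heavier than what the paper actually does for SWN. The paper exploits a structural feature that is specific to SWN and absent in LS: because there is no sharing step, agent $i$'s report never enters any other agent's pointer. Every $k\neq i$ points to her favorite remaining element of $r_k'\cup\{k\}$, and neither $\succ_i'$ nor $r_i'$ appears in that computation. So once one observes that, under truthful play, the item $h_{j'}$ with $h_{j'}\succ_i h_j$ is absorbed in a cycle $C_{j'}$ not containing $i$, the paper simply notes that $C_{j'}$ is determined solely by the reports of its own members and therefore forms identically under any misreport by $i$; hence $i$ cannot obtain $h_{j'}$. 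The same one–line observation handles the neighbor–set case (the pointers of agents in $C_{j'}$ do not depend on $r_i'$), with no need to reason about disconnecting a set $D_i$ or about an ordering.

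By contrast, you reproduce the LS machinery—the $B$-versus-$C$ overlap argument for preferences, and the ``before $i$ is processed / $D_i$ only matters afterward'' reasoning for neighbor sets. That machinery is \emph{necessary} in LS, where $i$'s pointing can reshape later neighbor sets through Share, but it is overkill in SWN. One caveat: your neighbor-set step leans on the LS notion of ``before $i$ is processed,'' which relies on the ordering $\mathcal{P}$ that SWN does not have; the right SWN substitute is the observation that any trading cycle touching $D_i$ (before $i$'s own cycle) must lie entirely inside $D_i$, so the $D_i$ block is isolated and cannot affect $i$'s match. With that adjustment your argument is sound—just longer than it needs to be.
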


\begin{proof}
Since each agent $i$'s type consists of two parts, her preference $\succ_i$ and her neighbor set $r_i$, we will prove misreporting neither $\succ_i$ nor $r_i$ can improve her allocation. 

\noindent \textbf{Misreport on $\succ$:} For agent $i$, we fix her reported neighbor set as $r_i'$. Her real preference is $\succ_i$ and reported preference is $\succ_i'$. Now we suppose her allocation $\pi_i(( \succ_i', r_i' ), \theta'_{-i}) = h_{j'} \succ_i \pi_i(( \succ_i, r_i' ), \theta'_{-i}) = h_j$. 

In SWN, when $i$ truthfully reports her preference, agent $i$ is allocated $h_j$ instead of $h_{j'}$, we know that $h_{j'}$ is in a trading cycle without $i$. Let the cycle containing $j'$ be $C_{j'}$. When we fix the others preference and $i$ misreports as $(\succ'_i,r_i')$, since the trading cycle $C_{j'}$ is only determined by agents' preference in $C_{j'}$, which excludes agent $i$, $C_{j'}$ still forms. By SWN, agent $i$ cannot be allocated $h_{j'}$, which contradicts our assumption.

\noindent \textbf{Misreport on $r$:} For agent $i$, we fix her reported preference as $\succ_i'$. Her real neighbor set is $r_i$ and reported neighbor set is $r_i'$. Now we suppose her allocation $\pi_i(( \succ_i', r_i' ), \theta'_{-i}) = h_{j'} \succ_i' \pi_i(( \succ_i', r_i ), \theta'_{-i}) = h_j$. 

In SWN, each agent can only be allocated the item from her reported neighbor set $r_i'$. Therefore, we have $j \in r_i$, $j' \in r_i$ and $j' \in r_i'$. Since with true neighbor set $r_i$, agent $i$ is allocated $h_j$ instead of $h_{j'}$, we know that $h_{j'}$ is in a trading cycle without $i$. Let the cycle containing $j'$ be $C_{j'}$. According to SWN, each agent in $C_{j'}$ is pointing to her neighbor, which means their neighbor set is irrelevant to $r_i$ as well as any $r_i'$. Therefore, the trading cycle $C_{j'}$ still remains and excludes agent $i$ when agent $i$ misreports $r_i'$. By SWN, agent $i$ cannot be allocated $h_{j'}$, which contradicts our assumption.

Put the above two steps together, we have proved that SWN is incentive compatible. 
\end{proof}

\begin{theorem}
SWN is Stable-CC.
\end{theorem}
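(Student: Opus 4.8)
The plan is to mirror the structure of the Stable-CC proof already given for LS, since SWN is essentially LS without the sharing step and with a static neighbor set. Suppose, for contradiction, that for some type profile $\theta$ there is a blocking coalition $S$ under complete components for the SWN allocation $\pi(\theta)$. Because $S$ is a complete component in $G(\theta)$, every agent in $S$ has every other agent of $S$ as a neighbor, so any trade confined to $H_S$ among agents of $S$ is feasible for SWN (each such agent points only to neighbors). Let $z(\theta)$ be the blocking allocation: $z_i(\theta)\in H_S$ and $z_i(\theta)\succeq_i\pi_i(\theta)$ for all $i\in S$, with strict preference for some $j\in S$.

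The key step is to argue that SWN would already have realized (at least as good as) the trade $z$ restricted to $S$. I would run the SWN construction and track, round by round, the first agent of $S$ to be removed. Observe first that if a trading cycle in SWN contains an agent of $S$ who receives an item of $H_S$, then — because inside $S$ the graph is complete and $z$ only permutes $H_S$ within $S$ — one can show the whole cycle lies within $S$; more importantly, I want to show every agent of $S$ ends up with an item of $H_S$ under $\pi$ that is weakly better than $z_i(\theta)$. The cleanest route: consider the agent $i^\star\in S$ who, among all agents of $S$, is the first to leave in SWN. At the moment $i^\star$ points, the owner of $z_{i^\star}(\theta)$ is still present (it is in $S$ and no agent of $S$ has left yet), and that owner is a neighbor of $i^\star$; hence $i^\star$'s SWN match is $\succeq_{i^\star} z_{i^\star}(\theta)$. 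Iterating this argument — after removing the first cycle, the remaining agents of $S$ still form a complete component and still own a subset of $H_S$ large enough to realize the relevant part of $z$ — gives $\pi_i(\theta)\succeq_i z_i(\theta)$ for every $i\in S$. Since $z_i(\theta)\succeq_i\pi_i(\theta)$ as well and preferences are strict, $z_i(\theta)=\pi_i(\theta)$ for all $i\in S$, contradicting the existence of $j\in S$ with $z_j(\theta)\succ_j\pi_j(\theta)$.

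I would therefore conclude that no blocking coalition under complete components exists, so SWN is Stable-CC. The main obstacle is making the inductive "the remaining agents of $S$ can still realize the rest of $z$" step precise: one must check that after some full cycles of $S$-agents have left with their $z$-items, the residual agents of $S$ still own exactly the residual items $z$ assigns among them, so that the "first to leave" argument can be reapplied; this uses that $z$ is a permutation of $H_S$ within $S$ and that, by induction, the agents removed so far received precisely their $z$-items. A secondary subtlety is handling cycles that mix $S$-agents with non-$S$-agents before any $S$-agent leaves: I would note that the first $S$-agent to leave cannot be in such a mixed cycle receiving a non-$S$ item, because its $z$-item (an $S$-item whose owner is a present neighbor) is available and weakly preferred, so SWN would route it there or to something even better — and in either case the contradiction with strictness still lands once the induction closes.
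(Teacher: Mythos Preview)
Your proposal is correct and follows essentially the same route as the paper: assume a blocking coalition $S$ that is a complete component, use that every $z_i(\theta)$ is owned by a neighbor of $i$, and conclude that SWN already gives each $i\in S$ an item at least as good as $z_i(\theta)$, contradicting the strict improvement for some $j$. Your write-up is in fact more careful than the paper's, which asserts the key step (``$j$ can always point to and get allocated $z_j(\theta)$'') without spelling out the first-to-leave induction or the mixed-cycle issue that you flag; your observation that $\pi_i=z_i$ on each removed cycle (by strictness) keeps $z$ a permutation on the residual $S$, which is exactly what makes the induction close.
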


\begin{proof}
For every $S\subseteq N$ and their item set $H_S$. Let the allocation given by SWN be $\pi(\theta)$. If there exists a blocking coalition $S$, where $S \subseteq N$ is the node set of a complete component in $G(\theta)$. 

Since $S$ is the node set of a complete component, we have $\forall i \in S, S \subseteq r_i$. A blocking coalition $S$ suggests there exists a $z(\theta)$ such that for all  $i\in S$, $z_i(\theta) \in H_S$,  $z_i(\theta) \succeq_i \pi_i(\theta)$ with at least one $j\in S$ we have $z_j(\theta) \succ_j \pi_j(\theta)$. Therefore, for all $ j \in S$, the blocking coalition guarantees the owner of $z_j(\theta)$ and $j$ are neighbors. Based on SWN, $z_j(\theta) \succ_j \pi_j(\theta)$ means $j$ can always point to and get allocated $z_j(\theta)$ instead of $\pi_j(\theta)$. Thus, the trading cycle which contains the owner of $z_j(\theta)$ and $j$ can trade by the cycle (i.e.,$\forall i \in S, z_i(\theta)$ = $\pi_i(\theta)$). This contradicts the assumption of existing at least one $j \in S \subseteq r_j, z_j(\theta) \succ_j \pi_j(\theta)$. Hence, SWN is Stable-CC.
\end{proof}

\section{Additional Simulations}
Due to the lack of real-world data sets, we apply two types of networks that capture the feature of social networks.
\subsection{GIRG}
GIRG first introduced by \citeauthor{bringmann2019geometric}~\shortcite{bringmann2019geometric} identifies the structural properties of social networks. It is a scale-free random network with large clustering coefficients. We carry out the same experiments regarding the average ascension in rank $D$ as the increase of agent number $n$. Here we adopt the default parameter $\tau=2.9,\alpha=6$ in the GIRG model. We generate 100 random GIRG networks for 10 different $n$ from 5 to 50.

\begin{figure}[ht]
    \centering
    \includegraphics[scale=0.16]{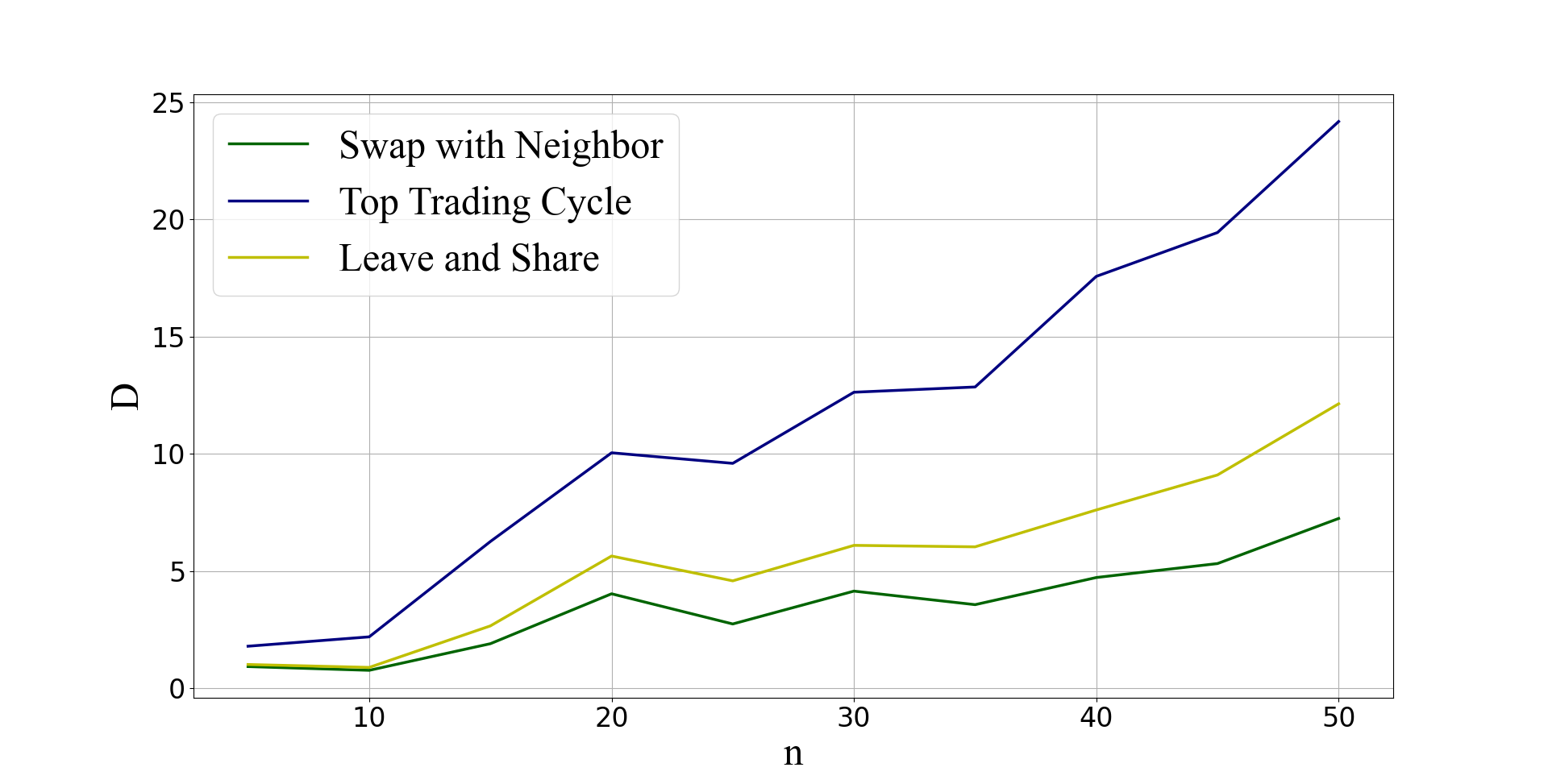}
    \caption{100 GIRG graphs generated for each $n$ to see how $D$ changes accordingly.}
    \label{fig:girg}
\end{figure}

\begin{figure}[ht]
    \centering
    \includegraphics[scale=0.16]{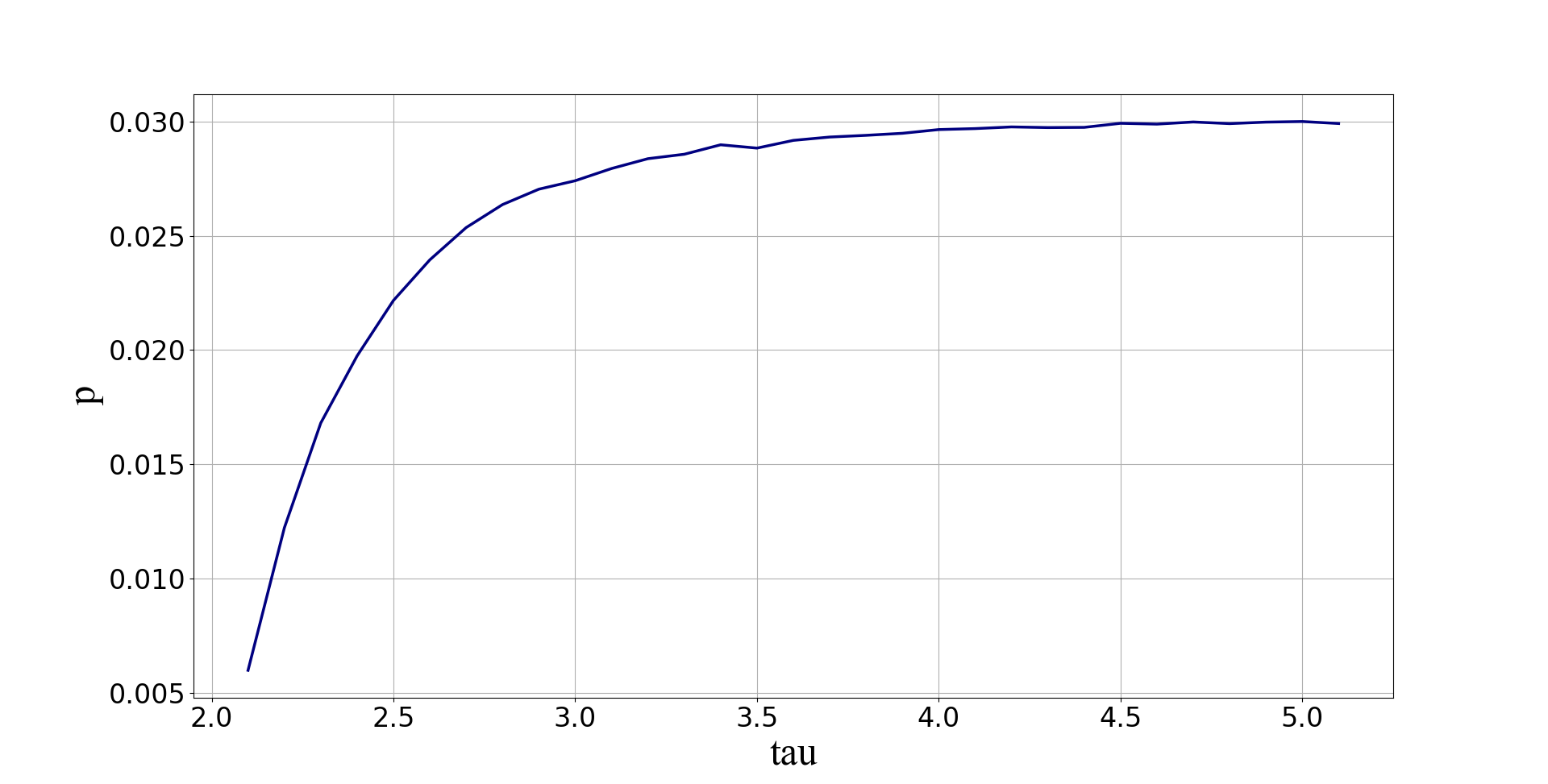}
    \caption{GIRG graph with $\alpha=6$, see the change in expected connectivity with $\tau$ for 5000 times}
    \label{fig:tau}
\end{figure}

\begin{figure}[ht]
    \centering
    \includegraphics[scale=0.16]{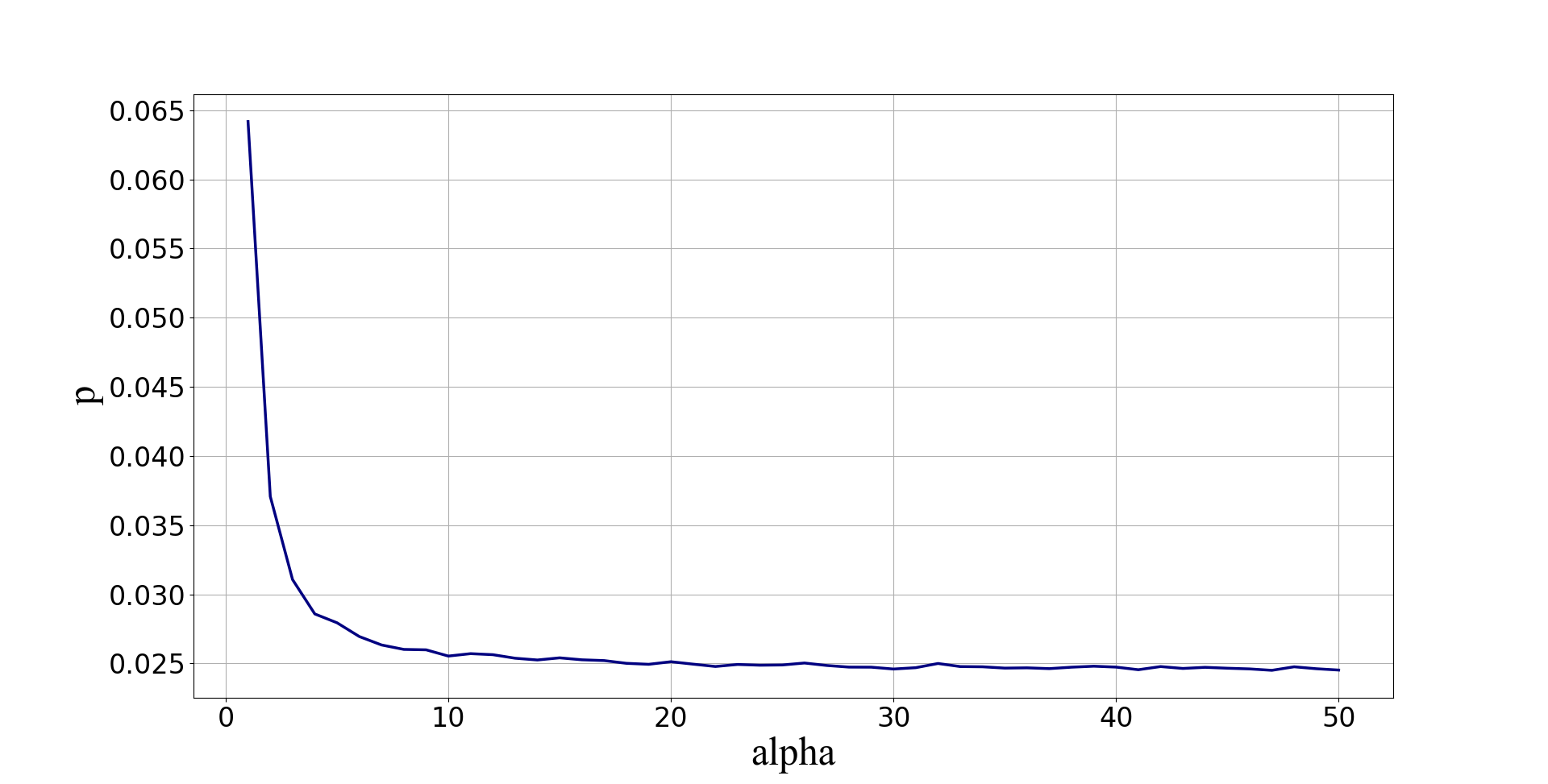}
    \caption{GIRG graph with $\tau=2.9$, see the change in expected connectivity with $\alpha$ for 5000 times}
    \label{fig:alpha}
\end{figure}

As shown in Figure~\ref{fig:girg}, LS outperforms SWN when the social network is a GIRG. Combining Figure~\ref{fig:tau} and Figure~\ref{fig:alpha}, we can see that GIRGS have a very narrow range of connectivity $p$. Also, comparing GIRGS to the randomly generated graph with same connectivity, LS's performance in GIRGS is a lot better. One explanation is that GIRGs have many clusters, which makes the sharing process much more efficient.

\subsection{Small World}
\citeauthor{watts1998collective}~\shortcite{watts1998collective} started the research regarding the small world phenomenon, and this particular network is widely used in simulations~\cite{bakshy2012social,phan2015natural,goel2009social}. In their model, the parameter $k$ indicates the expected neighbors number for each agent. We carry out the same experiments to observe the change in average ascension $D$ and $k$. Note that, we can use $p = \frac{k}{n}$ to compute the expected connectivity. We generate 100 small world networks with 50 agents and a fixed preference for 10 different $k$ from 5 to 50.

\begin{figure}[ht]
    \centering
    \includegraphics[scale=0.16]{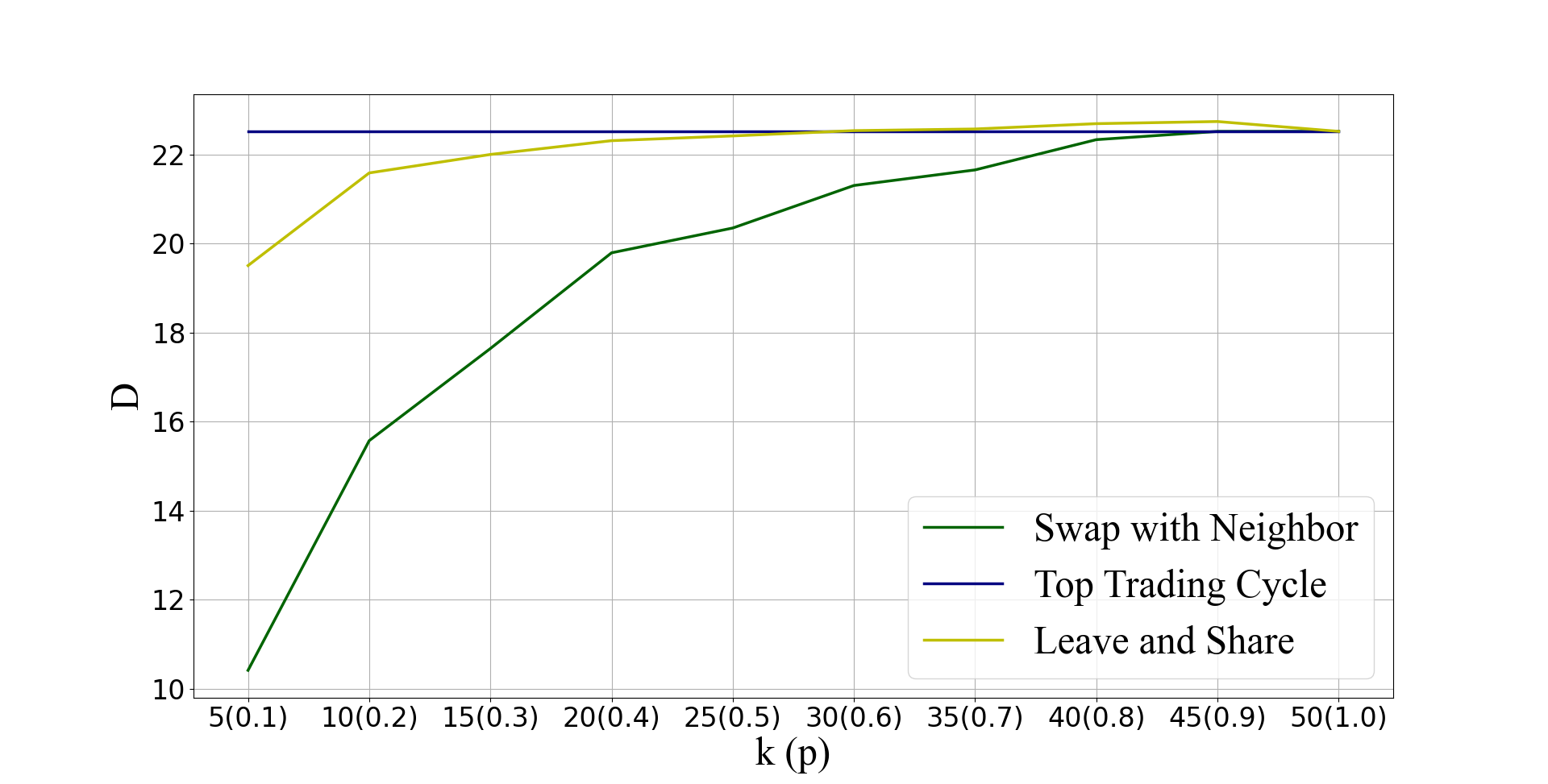}
    \caption{100 small-world graphs generated for each $k$ to see how D changes accordingly.}
    \label{fig:sm}
\end{figure}

From Figure~\ref{fig:sm}, we can see that LS performs much better in low connectivity compared with randomly generated graphs. Since the small world network model guarantees the graph to be a connected graph in extremely low connectivity, the sharing process of LS can always transform the network to be a complete graph. Therefore, the allocation given by LS is much closer to TTC.

\bibliographystyle{named}
\bibliography{ijcai23.bib}

\end{document}